\documentclass[journal, twocolumn]{IEEEtran}
\usepackage{amsmath}
\usepackage{amssymb}
\usepackage{enumerate}
\usepackage{graphicx}
\usepackage{epstopdf}
\usepackage{subfigure}
\usepackage{cite}
\usepackage{laterref}
\usepackage[mathscr]{euscript}
\usepackage{algorithm}
\usepackage{algpseudocode}
\DeclareGraphicsExtensions{.eps}
\newtheorem{theorem}{Theorem}[section]
\newtheorem{lemma}[theorem]{Lemma}

\usepackage{color}
\usepackage[usenames,dvipsnames]{xcolor}
\newcommand*{\QEDA}{\hfill\ensuremath{\blacksquare}}

\usepackage{tikz}
\usetikzlibrary{shapes,arrows}

\newenvironment{proof}[1][Proof]{\begin{trivlist}
\item[\hskip \labelsep {\bfseries #1}]}{\end{trivlist}}

\newenvironment{remark}[1][Remark]{\begin{trivlist}
\item[\hskip \labelsep {\bfseries #1}]}{\end{trivlist}}
\usepackage{setspace}

\ifCLASSINFOpdf
\else
\fi
\hyphenation{op-tical net-works semi-conduc-tor}
\begin{document}
\title{Control Capacity of Partially Observable\\ Dynamic Systems in Continuous Time} 
\author{Stas~Tiomkin, Daniel~Polani, Naftali~Tishby
\thanks{ The Rachel and Selim Benin School of Computer Science and Engineering,The Hebrew University, Givat Ram 91904, Jerusalem, Israel, stas.tiomkin@mail.huji.ac.il}
\thanks{School of Computer Science, University of Hertfordshire, Hatfield AL10 9AB, United Kingdom, d.polani@herts.ac.uk}
\thanks{The Edmond and Lilly Safra Center for Brain Sciences and The Rachel and Selim Benin School of Computer Science and Engineering, The Hebrew University, Givat Ram 91904, Jerusalem, Israel, tishby@cs.huji.ac.il}}%


\maketitle
\begin{abstract}
Stochastic dynamic control systems relate in a probabilistic fashion the space of control signals to the space of corresponding future states. Consequently, stochastic dynamic systems can be interpreted as an information channel between the control space and the state space. In this work we study this {\it control-to-state} informartion capacity of stochastic dynamic systems in continuous-time, when the states are observed only partially.  
The control-to-state capacity, known as {\it empowerment}, was shown in the past \cite{Empowerment1,  Empowerment3, Empowerment4, Empowerment6,   Empowerment9,  Empowerment12} to be useful in solving various Artificial Intelligence \& Control benchmarks, and was used to replace problem-specific utilities. The higher the value of empowerment is, the more optional future states an agent may reach by using its controls inside a given time horizon. 

The contribution of this work is that we derive an efficient solution for computing the control-to-state information capacity for a linear, partially-observed Gaussian dynamic control system in continuous time, and discover new relationships between control-theoretic and information-theoretic properties of dynamic systems. Particularly, using the derived method, we demonstrate that the capacity between the control signal and the system output does not grow without limits with the length of the control signal. This means that only the {\it near-past window} of the control signal contributes effectively to the control-to-state capacity, while most of the information beyond this window is irrelevant for the future state of the dynamic system. We show that empowerment depends on a time constant of a dynamic system.
\end{abstract}
\begin{IEEEkeywords}
Information capacity, dynamic control systems, empowerment, Gaussian process, Lyapunov equation, controllability gramian, stability, water-filling. 
\end{IEEEkeywords}
\IEEEpeerreviewmaketitle
\section{Introduction}
\IEEEPARstart{R}ecent advances in the understanding of complex dynamic systems reveal intimate connections between information theory and optimal control, \cite{InfoContr1, InfoContr2, InfoContr3, InfoContr4, InfoContr5, InfoDecisions, InfoDecisions2}.  To act optimally, engineering and biological systems must satisfy not only the requirements of optimal control, (e.g., getting close to a target state, tracking smoothly a nominal trajectory etc.), but increasingly other important constraints, such as the limitation of bandwidth, the restriction of memory or limited delays.
 
A stochastic dynamic control system is driven from state to state by appropriate control signals from the admissible set of control signals. All the system states achievable by some admissible control signal define the reachable set of a dynamic system. In this work we take an alternative view of reachability, by studying the relation between the admissible control and the reachable set of states in terms of information theory. For this, we consider stochastic dynamic systems as information channel, \cite{Cover}, between these two sets, or more precisely, between two random variables - the stochastic control, and the resulting state of dynamic system, \cite{RefControllability}.
Each of the reachable set states is achievable by some control signal from the admissible control state. As elaborated below, an important property of the correspondence between the reachable and the admissible sets is a number of different control signals which are distinguishable at a particular reachable set state. A larger value would indicate a more influential control set. This influence is cast in the language of information theory as a question of the capacity between the control signal and the future state of the dynamic system.

\subsection{Empowerment}
This control capacity, also known as empowerment,\cite{Empowerment1,  Empowerment3, Empowerment4, Empowerment6,   Empowerment9,  Empowerment12}, has been shown to worked well as a universal heuristic in many contexts by generalizing "controllability" in an information-theoretic sense. It provides a plausible framework for generating intrinsic (self-motivated) behaviour in  agents, eliminating the need for task-predefined, external rewards. 

As an intuitive example, consider the underpowered pendulum, \cite{Empowerment9}, driven by a {\bf low-energy} stochastic control process. It has a) stable and b) unstable equilibria positions, (the bottom  and the top position, respectively). In the bottom state, fewer states can be controllably reached than in the upright pendulum state. In the top state, while, on its own, unstable, with the help of a control signal, a richer set of separate "futures" (in a still precisely to define sense) can be controllably produced than in the bottom.  
The maximum number of distinguishable controls at the system output is given by the information capacity between the control random variable and the future state, viewed as a random variable depending on the (random) control signal. Here the dynamic system comprises the information channel.

{Among other uses, it constitutes an example for agent controllers implementing so-called "intrinsic motivation" principle \cite{Schmidhuber1, Oudeyer1, Ay1, Ay2}, a class of control algorithm which has recently received significant attention; controllers based on these principles substitute problem-specific utilities by generic measures only depending on system dynamics to produce "situation-relevant" controls.}

The suitability of the control-state capacity (empowerment) to implement an "intrinsic motivation"-style controller was explored in a series of theoretical and practical studies \cite{Empowerment1,  Empowerment3, Empowerment4, Empowerment6,   Empowerment9}. 

The empowerment landscape is computed for a dynamical system and an agent is driven along this landscape as to maximize the local empowerment gradient. Typical effects include (but are not limited to) the agent being led to unstable equilibria and being stabilized there. The control-state capacity produces pseudo-utility landscapes from the dynamics of the system and is a is a promising approach in designing a priori (pre-task) controls for artificial agents.

Until now, little work has been done in the domain of continuous space and none in fully continuous (rather than discretized) time. Here, actions need to be constrained by power, and, up to the present paper, no relation between favoured time horizon and control signal power in the continuous time was known. Here we contribute by 1) calculating empowerment for continuous time and space (continuous time was not considered in the past), 2) computing the full solution for the linear control case, 3) link time horizon characteristics and power. 

To evaluate the empowerment of a partially observable dynamic system means to compute the channel capacity between the control and the final state, i.e. the maximum of the mutual information. We propose to compute the maximum of the mutual information between the control process trajectory, which we denote by $u(0\rightarrow T)$ or briefly by $u(\cdot)$, and the resulting system output, $y(T)$, where the system dynamics is perturbed by the process noise $\eta$, and the system output is perturbed by the sensor noise $\nu$.
\tikzstyle{int}=[draw, fill=none, minimum size=4em]
\tikzstyle{init} = [pin edge={to-,thin,black}]
\begin{figure}[h]
\begin{center}
\begin{tikzpicture}[node distance=2.5cm,auto,>=latex']
    \node [int] (a) {$\begin{aligned}\dot{x}(t)=&f(x(t), u(t), \eta(t))\\y(t)=&g(x(t), u(t), \nu(t))\end{aligned}$};
    \node (b) [left of=a,node distance=3.5cm, coordinate] {a};
    \node (end) [right of=a, node distance=3.5cm, coordinate]{b};
    \path[->] (b) edge node {$u(0\rightarrow T)$} (a);
    \path[->] (a) edge node {$y(T)$} (end);
\end{tikzpicture}
\caption[Figure]{The dynamic system, $f(\cdot)$, is driven by the uncontrolled noise $\eta(t)$, and the stochastic control $u(0\rightarrow T)$ to the future state $x(T)$, which is partially observed by $y(T)$ through the sensor $g(\cdot)$ within the observation noise $\nu(t)$.}
\label{channelFig}
\end{center}
\end{figure}
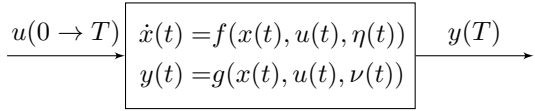
The relation between the control trajectory, $u(0\rightarrow T)$ and the future observed state, $y(T)$ is is thus stochastic and $u(\cdot)$ and $y(T)$ are jointly distributed according to Fig. \ref{InfochannelFig}.
\begin{figure}[h!]
\begin{center}
\begin{tikzpicture}[node distance=2.5cm,auto,>=latex']
    \node [int] (a) {$p\left(y(T)\mid u\left(0\rightarrow T\right)\right)$};
    \node (b) [left of=a,node distance=3.5cm, coordinate] {a};
    \node (end) [right of=a, node distance=3.5cm, coordinate]{b};
    \path[->] (b) edge node {$u(0\rightarrow T)$} (a);
    \path[->] (a) edge node {$y(T)$} (end);
\end{tikzpicture}
\caption[Figure]{An information channel, given by the conditional probability distribution, $p$, which is induced by the dynamic system, $f$, and the sensor, $g$, in Fig. \ref{channelFig}.}
\label{InfochannelFig}
\end{center}
\end{figure}
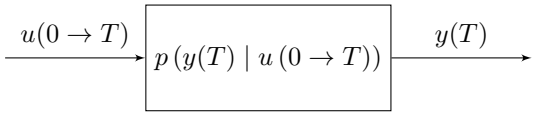
Empowerment of a stochastic dynamical system is given by 
\begin{align}
\mathcal{C}^* = \underset{p(u(\cdot))}{\max}I\left[u(\cdot);y(T)\right]
\end{align}
Apart from extending the applicability of the formalism to continuous time and arbitrary linear systems, the present analysis contributes new insights to the multifaceted marriage between information theory and optimal control.

Empowerment, in its original definition, requires the specification of its time horizon. This is the essential free parameter of the empowerment formalism. One question which regularly arises is how to choose this horizon. Our study, amongst other, shows that a characteristic time horizon may emerge through the dynamics only.

Communication-limited control is another important concept in modern engineering where the control-state capacity is useful is communication-limited control. The control-state capacity considers the channel between the controller to the future state, while most of the work dealing with communication-limited control, e.g. \cite{InfoContr1, InfoContr2}, is focused on the channel between the state sensor to the controller.  

For the first channel, the central question is: {\it what is the minimal information from the state to the controller that is required in order to satisfy a control objective?}. For the second channel, the central question is its dual: {\it what is the maximal information that can be transferred from the controller to the state?"}. 

These questions are complementary questions within the action-perception cycle paradigm, addressing the sensing and the actuation aspects of the cycle, respectively, \cite{InfoDecisions}. Obviously, a comprehensive study of the control under communication constraints needs to consider also the dual, controller-to-state channel.

These are only a selection of a multitude of directions where the control-state capacity has direct implications and provides new understandings. In this paper we concentrate on studying the intrinsic properties of this channel, relating the information-theoretic properties with properties of optimal control. In particular, we derive the relation between the energy, the time-horizon, and the capacity of the control-state of the dynamic system. 

The paper is organized as following. In Section \ref{Problem Definition} we define the problem of the information capacity between the control signal and the future output. The solution is based on the bi-orthogonal representation of Gaussian process. We establish the optimality conditions, KKT-conditions, for the optimal variances of the Gaussian process expansion, and we show that the optimal set of the variances can be found efficiently by an iterative water-filling algorithm. The second component of the solution is formed by the expansion functions of the Gaussian process expansion. We derive an optimality condition for the expansion functions. This optimality condition is satisfied by the spectral decomposition of the controllability Gramian of the linear control system. In the rest of the paper we elaborate on the details of the solution. 

Particularly, In Section \ref{GauProSec} we review briefly the bi-orthogonal expansion of Gaussian processes. In Section \ref{sec:MutInfo} we define the mutual information for continuous-time linear control systems. In Section \ref{sec:OptProb} we define the optimization problem, which solves the problem defined in Section \ref{Problem Definition}. In Section \ref{sec:OptEq} we elaborate on the optimality conditions of the optimization problem, where we derive the KKT-optimality conditions and set the conditions for the optimality of the Gaussian process expansion functions. In Section \ref{sec:AsymAna} we provide the asymptotic analysis of the control-to-output capacity. In Section \ref{sec:Simul} we provide computer simulations of the developed method and explain their results. And, finally, in Section \ref{sec:Summ} we summarise the work.

\section*{Main Result}
We explore the dependency of the control capacity on the time horizon, $T$, and the power, $P$. We show how to compute efficiently the information capacity, $C(P, T)$, between the control signal and the future output in continuous-time linear control systems. The solution is based on the bi-orthogonal expansion of the Gaussian process. This enables us to decompose the solution procedure into two independent aspects: the 'iterative water-filling' algorithm, and the spectral decomposition of the controllability Gramian. 

In our example, we find particularly that  the control input signal contributes to the empowerment value only during a temporally limited phase. Beyond this limited time window, the future output  is not essentially influenced by the control signal. This finding is consistent with intuition, as memoryless linear control systems 'forget' far past input events. 

We establish a quantitative relation between empowerment and the intrinsic features of dynamic systems, such as the time constant, $\tau$. Particularly, we show the control capacity achieves a finite limit for $\tau \rightarrow 0$. While this is, on first sight, we explain this effect mathematically, and provide a physical intuition.    

We argue that various results of this work will permit the study of nonlinear control systems as well by locally approximating them by linear systems. 

\section*{Notation}\label{notationSec}
The following notation is used in the paper. Uppercase and lowercase letters denote real matrices and vectors, respectively. $X^{'}$ denotes the (real-values) transpose of $X$. $x_{[: i]}$ is the $i$-th column of matrix $X$, $\mathbf{Tr}\left\{\cdot\right\}$ denote the matrix trace and $\mathbf{E}\left[\cdot\right]$ denotes the expectation operator. The vertical bars,$\left|\cdot\right|$, denote the matrix determinant. The notation $x_{\widehat{i}}$, and $x_{i}$ mean all the components of $x$ excluding the $i$-th component, and the $i$-th component of $x$, respectively. 

We always assume the initial state of the dynamical system to be given and known, and thus, by abuse of notation, we will never write explicitly the conditioning with respect to this initial state.

\section{Problem Definition}\label{Problem Definition}
Consider the following constrained optimization problem:
\begin{align}
&\underset{p(u(\cdot))\in\mathscr{P}}{\max}I\left[u(\cdot);y(T)\right]\\
&\mbox{s.t. }\;\;\dot{x}(t) = Ax(t) + Bu(t) + G\eta(t),\\
&\qquad {y}(t) = Cx(t) + F\nu(t),
\end{align}
where $I\left[\cdot;\cdot\right]$ is the mutual information between the $n$-dimensional vector $x_{n\times 1}(T)$ at a given time $T$, and the $p$-dimensional Gaussian control process\footnote{in this paper we consider w.l.og. zero-mean Gaussian processes, because the mean does not affect the mutual information.}, $u_{p\times 1}(\cdot)$, over times $t\in[0,T]$, { distributed according to the probability density function, $p(u(\cdot))$, which is restricted to the space of Gaussian process distribution functions, $\mathscr{P}$}. 
The dependency of $x(T)$ on $u(\cdot)$ is restricted by the dynamic affine-control system with the dynamics matrix, $A_{n\times n}$, the control process gain matrix, $B_{n\times p}$, and the process noise gain matrix $G_{n\times n}$, scaling the white Gaussian process noise, $\eta(t)$, which has a given autocorrelation function: 
\begin{align}\label{noiseAutoCor}
R_{\eta}(t_1, t_2)= \sigma_{\eta}\delta(t_1, t_2).  
\end{align}
where $\delta(t_1, t_2)$ is the delta function, and $\sigma_{\eta}$ is the noise power. The sensor noise, $\nu(t)$, is assumed to be the white Gaussian noise with the autocorrelation function:
\begin{align}
R_{\nu}(t_1, t_2)= \sigma_{\nu}\delta(t_1, t_2). 
\end{align}
 
The optimization takes place over the space of the Gaussian process probability distributions, $\mathscr{P}$, or, equivalently, over the space of the autocorrelation functions of the control process, $R_{u}(t_1, t_2)=\mathbf{E}\left[u(t_1)u^{'}(t_2)\right]$, because a Gaussian process is defined uniquely by its autocorrelation function. We assume the process noise, $\eta(t)$, to be  independent of the control signal, $u(t)$. We specifically do not restrict the control process to be a stationary process.

The mutual information between continuous variables would be unbounded unless their power is limited. To render the question well-defined, we therefore look for the maximum of the mutual information, {\it the capacity}, under the constraint of a given maximum total\footnote{a pointwise maximal power might also be considered.} power of the control process: 
\begin{align}
&\underset{R_u(\cdot, \cdot)}{\mbox{{\it max }}}I\left[u(\cdot);y(T)\right]\label{optProblem}\\
&\mbox{s.t. }\;\dot{x}(t) = Ax(t) + Bu(t) + G\eta(t),\label{optProblemConstr}\\
&\qquad y(t) = Cx(t) + F\nu(t),\label{optProblemConstr2}\\
&\mbox{     }\quad\;\;\mathbf{Tr}\left\{\int_0^TR_u(t, t)dt\right\}\le P.
\end{align}
The optimization over the space of the autocorrelation functions corresponds to  an optimization over the space of symmetric-positive definite functions, 
\begin{align}
\forall t_1, t_2\in[0,T] :& R_u(t_1, t_2) = R_u(t_2, t_1),\label{R_symmetry}\\
\forall h\in L^2([0,T]) :& \int_0^T\!\!\!\!\!\int_0^T\!\!\!\!\!R_u(t_1, t_2)h(t_1)h(t_2)dt_1dt_2\ge 0\label{R_psd},
\end{align}
which introduce an uncountable set of constraints to the problem in (\ref{optProblem}). We propose to perform the optimisation in (\ref{optProblem}) with regards to the components of the bi-orthonormal expansion of Gaussian process instead of $R_u(\cdot, \cdot)$, as described in the next section. 

The capacity in (\ref{optProblem}), $C$, depends on the control power constraint, $P$, the time horizon, $T$, and the dynamic system matrices, $A$, $B$, $G$, and $F$. The goal of this work is to compute this capacity efficiently, and to explore its properties with regard to the free parameters of the problem, and the ensuing properties of the dynamical control system.

\section{Control Process Representation}\label{GauProSec}
In this next section we represent the control process using the bi-orthogonal expansion of Gaussian processes, \cite{Gallager}, which helps to recast the constraints in (\ref{R_symmetry}) and (\ref{R_psd}) in a more convenient form for the optimization. 

It can be shown that any zero-mean Gaussian process $u(t)$ may be represented by an appropriate choice of $\left\{g_i(t)\right\}_{i=1}^{\infty}$, where $\left\{g_i(t)\right\}_{i=1}^{\infty}$ is a countable set of real orthonormal functions and $u_i$ is a sequence of independent Gaussian random variables.

Then, any zero-mean\footnote{a mean can be added if desired} Gaussian process, $u(t)$, may be represented by an appropriate choice of $\left\{g_i(t)\right\}_{i=1}^{\infty}$, and $\left\{u_i\right\}_{i=1}^{\infty}$ as following:
\begin{align}
u(t) = \sum_{i=1}^{\infty} u_ig_i(t).\label{controlBiExt}
\end{align}

This representation is known as the {\it bi-orthogonal expansion} of Gaussian process. It is bi-orthogonal because the expansion functions, $\left\{g_i(t)\right\}_{i=1}^{\infty}$, are orthogonal, and the random variables $\left\{u_i\right\}_{i=1}^{\infty}$ are statistically independent. Using the bi-orthogonal representation, we can represent the $p$-dimensional control process as following. 
\begin{align}
\vec{u}(t) = \left[\begin{matrix}\sum_iu_{i1}g_{i1}(t)\\ \vdots \\ \sum_iu_{ip}g_{ip}(t)\end{matrix}\right]_{p\times 1}.\label{controlVec}
\end{align}
We assume that the components of the control process vector, $\vec{u}(t)$, are independent. Consequently, the autocorrelation function of $\vec{u}(t)$ is diagonal, whose $kk$-th entry is specified by the parameters $\sigma_{ik}$ of the control process as follows:
\begin{align}
\left[R_u(t_1, t_2)\right]_{kk} = \sum_{i}\sigma_{ik}g_{ik}(t_1)g_{ik}(t_2).\label{controlProcessAutoRepr}
\end{align}
Consequently, the power constraint in (\ref{optProblemConstr}) appears as:
\begin{align}
\mathbf{Tr}\left\{\int_0^TR_u(t, t)dt\right\} = \sum_{m=1}^p\sum_{i=1}\sigma_{im}\le P,\label{totPowerConstr}
\end{align}
and, the following orthogonality constraints on the function set $\left\{g_{im}(t)\right\}_{im}$ are:
\begin{align}
\forall m,i,j  \enspace : \enspace \int_0^T g_{im}(t)g_{jm}(t)dt = \delta_{ij}\label{funOrthoConstr}.
\end{align}
\section{Mutual Information}\label{sec:MutInfo}
The mutual information between continuous variables is defined, e.g.\cite{Cover}, by the difference between the differential entropy of one variable, $h(\cdot)$ and the conditional differential entropy of this variable with respect to the other, $h(\cdot\mid\cdot)$. For Gaussian distributions, these are functions of the covariances of the corresponding random variables. In our case, we are interested specifically in the mutual information between the control signal over a time period $T$ and the resulting sensor signal at the end of that period:
{\small
\begin{align}
I\Bigl[u(0\rightarrow  T); y(T)\Bigr] =& h\Bigl[y(T)\Bigr] - h\Bigl[y(T)\mid u(0 \rightarrow T)\Bigr]\nonumber\\
=&\ln\Bigl(\Bigl|\Sigma_{y}(T)\Bigr|\Bigr)  - \ln\Bigl(\Bigl|\Sigma_{y\mid u(\cdot)}(T)\Bigr|\Bigr)\label{mutInfo}\\
=&\ln\Bigl(\Bigl|\frac{\Sigma_{y}(T)}{\Sigma_{y\mid u(\cdot)}(T)}\Bigr|\Bigr).
\end{align}
}
where $\Sigma_{y}(T)$, and $\Sigma_{y\mid u(\cdot)}(T)$ are directly found, \cite{LinSys}, from the solution\footnote{we assume w.l.o.g zero initial condition, which can be absorbed into the process noise covariance.} to the linear dynamic system equations in (\ref{optProblemConstr}) and (\ref{optProblemConstr2}), given by:
{\small
\begin{align}
y(T) = C\!\!\!\int_0^T\!\!\!\!\!e^{A(T-t)}Bu(t)dt + C\!\!\!\int_0^T\!\!\!\!\!e^{A(T-t)}G\eta(t)dt +F\nu(T)dt
\end{align}
}
Particularly, under the noise independence assumption:    
\begin{align}
&\Sigma_{x}(T) = \Sigma_{u}(T) + \Sigma_{\eta}(T),\label{VarX}\\
&\Sigma_{y}(T) = C\Sigma_{x}(T)C^{'} + \Sigma_{\nu}(T),\label{VarY1}
\end{align}
where
\begin{align}
&\Sigma_{u}(T) = \int\limits_0^T\int\limits_0^Te^{A(T-\tau_1)}BR_{u}(\tau_1, \tau_2)B^{'}e^{A^{'}(T-\tau_2)}d\tau_1d\tau_2,\nonumber\\
&\Sigma_{\eta}(T) ={\int\limits_0^T\int\limits_0^Te^{A(T-\tau_1)}GR_{\eta}(\tau_1, \tau_2)G^{'}e^{A^{'}(T-\tau_2)}d\tau_1d\tau_2},\nonumber\\
&\Sigma_{\nu}(T) = T\sigma_{\nu},\label{sensorCov}
\end{align}
where the last term is due to the sensor noise, which is not convolved with the system dynamics, but rather expands for $T$ time units as an unconstrained Wiener process with the variance $\sigma_{\nu}$. 
We will denote by $\Sigma_n$ the total covariance of uncontrolled noise. Under the noise independence assumption, the total uncontrolled covariance is the sum of the process noise and the sensor noise:
\begin{align}
\Sigma_n(T) \doteq C\Sigma_{\eta}(T)C^{'} + \Sigma_{\nu}(T),\label{totNoise}
\end{align}
which is also the covariance of $y(T)$, conditioned on the control process, $u(\cdot)$:
\begin{align}
\Sigma_{y\mid u(\cdot)}(T)  = \Sigma_n(T),
\end{align}
which holds, because, knowing the control process, the only uncertainty in $y(T)$ is due to the noise. 
Following the uncontrolled noise covariance definition in (\ref{totNoise}), the future observable state covariance matrix in (\ref{VarY1}) is:
\begin{align}
\Sigma_{y}(T) = C \Sigma_{u}(T) C^{'} + \Sigma_n(T) \label{VarY}.
\end{align}
Using the definitions of the control and the noise autocorrelation functions given by (\ref{noiseAutoCor}) and (\ref{controlProcessAutoRepr}), respectively, we have: 
{\small
\begin{align}
\Sigma_u(T) =  \sum_{m=1}^p\sum_i\sigma_{{im}}&\int_0^T\Bigl(e^{A(T-t_1)}b_mg_{{im}}(t_1)dt_1\Bigr)\nonumber\\
\cdot&\Bigl(\int_0^T e^{A(T-t_2)}b_mg_{im}(t_2)dt_2\Bigr)^{'},\label{ControlGramian}\\
\Sigma_{\eta}(T) = \sigma_{\eta}\int_0^Te^{A(T-t)}&GG^{'}e^{A^{'}(T-t)}dt \succ 0 .\label{NoiseGramian}
\end{align}
}
where $b_m = B_{[:, m]}$.  
The following notations will be used in the paper:
\begin{align}
&w_{A,b_m}(t) \doteq e^{A(T-t)}b_m,\\
&z_{im}(T)\doteq \int_0^Tw_{A,b_m}(t)g_{im}(t)dt,\label{Zvectors}\\
&W(A, b_m, T) \doteq \int_0^Tw_{A,b_m}(t)w^{'}_{A,b_m}(t)dt\succ 0,\label{GramianNotation}
\end{align} 
having dimensions $n\times 1$, $n\times 1$, and $n\times n$, respectively. Equation (\ref{GramianNotation}) is the controllability Gramian of the dynamic system. We assume the system is controllable, which means $W(A, b_m, T)$ is of full rank, \cite{LinSys}. 

Following the full representation of the control process in (\ref{controlBiExt}), we introduce a partial control signal representation without the $im$-th control component:
\begin{align}
u_{\widehat{im}}(\cdot) \doteq& \sum\limits_{j,k\; \neq \;i,m }u_{im}(\cdot),\label{controlBiExtPart}
\end{align}
where, by abuse of notation, we write briefly,
\begin{align}
u_{im}(\cdot) \doteq& u_{im}g_{im}(\cdot),\label{controlBiExtPart2}
\end{align}
where $u_{im}$ is as in (\ref{controlVec}). 
To improve the readability we will omit sometimes the time dependence, denoting $z_{im}(T)$ by $z_{im}$. Using the above notations, the control process covariance matrix in (\ref{ControlGramian}) appears as
{\small
\begin{align}
\Sigma_{u}(T) = \sum_{m=1}^{p}\sum_i\sigma_{im}{z}_{im}(T){z}^{'}_{im}(T)\label{SigmaUT}.
\end{align}
}
{
Consequently, the final state covariance matrix appears as:
{\small
\begin{align}
\Sigma_{x}(T) =  \sum_{m=1}^{p}\sum_i\sigma_{im}{z}_{im}(T){z}^{'}_{im}(T) +  \Sigma_{\eta}(T) \label{Q}.
\end{align}
}
The control process in (\ref{controlVec}) is represented by the collection of the independent components, $\{u_{im}\}_{im}$. While a particular component is denoted $u_{im}$. We will adopt the notation $u_{\widehat{im}}$ for the collection of component without the particular component $u_{im}$. It will be seen useful to consider the partial control process covariance matrix:
{\small
\begin{align}
\Sigma_{u_{\widehat{im}}}(T) = \Sigma_{u}(T) - \sigma_{im}{z}_{im}(T){z}^{'}_{im}(T)\label{SigmaUT_part}.
\end{align}
}
Consequently, the final state covariance, conditioned on the $im$-th control process component, is: 
{\small
\begin{align}
\Sigma_{x\mid u_{{im}}}(T) = \Sigma_{u_{\widehat{im}}}(T) + \Sigma_{\eta}(T)\label{Q_miDef},
\end{align}
}
which will be denoted in the following by $\Sigma_{x\mid u_{{im}}}$. The equation (\ref{Q_miDef}) follows from the linearity and the independence between different control process components. Intuitively, knowing a particular control component $u_{{im}}$ reduces the overall uncertainty in $x(T)$ to the uncertainty due to $u_{\widehat{im}}$ and the noise. The final observable state covariance, conditioned on the $im$-th control process component, is given by:
{\small
\begin{align}
\Sigma_{y\mid u_{{im}}}(T) = C  \Sigma_{u_{\widehat{im}}}(T)  C^{'}  + \Sigma_{n}(T)\label{Q_miDef},
\end{align}
}
} 
{The mutual information given in (\ref{mutInfo}) would be unbounded without further assumptions on $\left\{\sigma_{im}\right\}_{im}$ and $\left\{g_{im}(t)\right\}_{im}$. The assumption we use here is that the total sum of $\left\{\sigma_{im}\right\}_{im}$ is bounded by a positive constant $P$, and that the $\left\{g_{im}(t)\right\}_{im}$ are continuous over the closed interval $[0,T]$. With these constraints given, we proceed to formulate the optimization problem in the next section.}

\section{Optimization Problem}\label{sec:OptProb}
{Applying the particular expressions for the mutual information in linear dynamic systems, (\ref{mutInfo}), and the expressions for energy and orthogonality constraints of the control process constraints, (\ref{totPowerConstr}) and (\ref{funOrthoConstr}), the optimization problem (\ref{optProblem}) reduces to the following\footnote{we omit the constant factors that doesn't  affect the maximization.}:
{\small
\begin{align}
&\underset{\vec{\sigma}, \left\{g_{im}(\cdot)\right\}_{im}}{\mbox{{\it max  }}}\ln\left(\frac{\left|\Sigma_{n}(T) + \sum_{m=1}^p\sum_i\sigma_{im} Cz_{im}z^{'}_{im}C^{'}\right|}{\left|\Sigma_{n}(T)\right|}\right)\label{MIObjective}\\
&\quad{\mbox{{\it subject to}}}\enspace\sum_{m=1}^p\sum_i\sigma_{im} = P,\label{SigmaConstr1}\\
&\quad\qquad\qquad\enspace\forall i,m :\enspace \sigma_{im}\ge 0,\label{SigmaConstr2}\\
&\quad\qquad\qquad\enspace\forall i,m :\enspace \int_0^T g_{im}(t)g_{jm}(t)dt = \delta_{ij}\label{gConsrt},
\end{align}
}
}
where $\vec{\sigma}$ is a shortcut notation for  $\vec{\sigma}\doteq\{\sigma_{im}\}_{im}$. The corresponding unconstrained Lagrangian, $\mathcal{L}$, is:
\begin{align}\label{LagrangianDef}
\mathcal{L} =& \mathcal{L}\left[\vec{\sigma}, \{g_{im}(\cdot)\}_{im}, \lambda, \{\nu_{mij}\}_{mij}, \{\gamma_{im}\}_{im}\right],
\end{align}
\noindent{where $\lambda$, and $\nu_{mij}$, $\gamma_{im}$ are the Lagrange multipliers for the power, (\ref{totPowerConstr}), the orthogonality constraints, (\ref{funOrthoConstr}), and the variance positivity constraints, respectively. The Lagrangian becomes:} 
{\small
\begin{align}
\mathcal{L}=&\ln\left(\frac{\left|\Sigma_{n}(T) + \sum_{m=1}^p\sum_i\sigma_{im} Cz_{im}z^{'}_{im}C^{'}\right|}{ \left|\Sigma_{n}(T)\right|  }\right)\nonumber\\
&- \lambda\left(\sum_{m=1}^p\sum_i\sigma_{im} - P\right)
- \sum_{m=1}^p\sum_i\gamma_{im}\sigma_{im}\nonumber\\
&- \sum_{m=1}^p\sum_{ij}\nu_{m,i,j}\left( \int_0^T g_{im}(t)g_{jm}(t)dt - \delta_{ij}\right)\label{Lagrangian} 
\end{align}
}
The corresponding KKT optimality conditions are:
\begin{align}
\forall m,i,t\enspace:\enspace& \frac{\delta \mathcal{L}[g_{im}]}{\delta g_{im}}(t) = 0,\label{varDirG}\\
\forall m,i\enspace:\enspace& \frac{\partial\mathcal{L}}{\partial\sigma_{im}} = 0,\label{sigmaDer}\\
\forall m,i,j\enspace:\enspace& \int_0^T g_{im}(t)g_{jm}(t)dt = \delta_{ij},\label{nuDer}\\
\forall m,i\enspace:\enspace& \gamma_{im}\sigma_{im} = 0,\enspace\gamma_{im}\ge0,\enspace\sigma_{im}\ge0,\label{KKT}\\
&\sum_{m=1}^p\sum_i\sigma_{im} = P\label{lambdaDer}.
\end{align}
where (\ref{KKT}) is the complementarity KKT-condition. 
The optimization problem in (\ref{MIObjective}-\ref{gConsrt}) is a convex optimization problem with regard to $\{\sigma_{im}\}_{im}$ for a given set of the expansion functions, $\{g_{im}(t)\}_{im}$, which can be solved numerically by the existing convex optimization solvers, \cite{Boyd}. For example, it can be found iteratively by the coordinate ascent algorithm, starting from an arbitrary parameters set, $\{\sigma_{im}\}$, and climbing each time along different direction until convergence to the global maximum, which is guaranteed for any starting point, because the objective function in (\ref{MIObjective}) is concave. However, the properties of the mutual information enable to derive a formal solution for $\{\sigma_{im}\}_{im}$, as explained below.

The objective function, the mutual information in (\ref{MIObjective}), can be decomposed into individual components, each of which expresses the mutual information between a particular Gaussian control process component and the future state. We prove this in the next section. This separation provides an insight to the nature of the optimal control signal, and will enables us to derive an implicit solution for $\{\sigma_{im}\}_{im}$. 
 
\begin{remark}
n Section.\ref{Gramian Decomposition} we show that, despite having introduced the expansion of $g$ over $i$ as infinite, only a finite number of $z$-vectors (namely $n$) are in fact required to satisfy the optimality condition of the Lagrangian (\ref{LagrangianDef}) with regard to $\{g_{im}(\cdot)\}_{im}$. Therefore, in Eq. (\ref{SigmaUT}), we can limit ourselves to consider just the summation over $i=1,..,n$. 
\end{remark}
\subsection{Information Decomposition}
The mutual information in (\ref{MIObjective}) can be decomposed to a sum of mutual information terms between different control signal components: 
{\small
\begin{align}
I[u(\cdot);y(T)]= \frac{1}{np}\sum_{m=1}^p\sum_{i=1}^n\Bigl\{ &I[u_{{im}}(\cdot);y(T)]+\\
&+ I[u_{\widehat{im}}(\cdot);y(T)\mid u_{{im}}(\cdot)]\Bigr\}\label{MU_ExclIncl},
\end{align}
}
where $I[u_{{im}}(\cdot);y(T)]$, and $I[u_{\widehat{im}}(\cdot);y(T)\mid u_{{im}}(\cdot)]$ is the mutual information between the $im$-th control signal component and the future observable state, and the mutual information between the control signal without its $im$-th component and the future observable state, conditioned on the $im$-th control component, respectively.
The decomposition follows directly from the chain rule for the mutual information:
{\small
\begin{align}
I[U^{'}, U^{''};Y] = I[U^{'};Y] + I[U^{''};Y\mid U^{'}]. 
\end{align}
}
Consequently, the Lagrangian in (\ref{Lagrangian}) is equivalent to   
{\small
\begin{align}
\mathcal{L}= \frac{1}{np}&\sum_{m=1}^p\sum_{i=1}^n\Bigl\{ I[u_{{im}}(\cdot);y(T)]
+ I[u_{\widehat{im}}(\cdot);y(T)\mid u_{{im}}(\cdot)]\Bigr\}\nonumber\\
&- \lambda\left(\sum_{m=1}^p\sum_i\sigma_{im} - P\right)
- \sum_{m=1}^p\sum_i\gamma_{im}\sigma_{im}\nonumber\\
&- \sum_{m=1}^p\sum_{ij}\nu_{mij}\left( \int_0^T g_{im}(t)g_{jm}(t)dt - \delta_{ij}\right)\label{LagrangianEq}.
\end{align}
}
The advantage of this Lagrangian representation is due to the fact that the second term of the objective function, $I[u_{\widehat{im}}(\cdot);y(T)\mid u_{{im}}(\cdot)]$ does not depend on $\sigma_{im}$ by definition, while, in the first term, for all $i,m$ $\sigma_{im}$ is separated from all other coefficients, $\sigma_{j,k\;\neq \;i,m}$, which suggests an efficient way to compute the optimal parameters set, $\vec{\sigma}$.  
\section{Optimality Equations}\label{sec:OptEq}
\subsection{Expansion Variances - Generalized Water-filling.}
Computing the ordinary derivative of the Lagrangian in (\ref{LagrangianEq}) with regard to $\sigma_{im}$, and equating it to zero we get:
\begin{align}
\forall i,m\enspace:\enspace\frac{\partial \mathcal{L}(\sigma_{im})}{\partial\sigma_{im}} 
=& \frac{{z}^{'}_{im}C^{'}\Sigma_{y\mid u_{im}}^{-1}Cz_{im}}{1+\sigma_{im}{z}^{'}_{im}C^{'}\Sigma_{y\mid u_{im}}^{-1}Cz_{im}} - \lambda - \gamma_{im},\label{optimalityLagrSigma}
\end{align}
{
{Due to the complementary slackness constraint (\ref{KKT}), we have:}
\begin{align}
\sigma_{im}>0\Rightarrow& \frac{{z}^{'}_{im}C^{'}\Sigma_{y\mid u_{im}}^{-1}Cz_{im}}{1+\sigma_{im}{z}^{'}_{im}C^{'}\Sigma_{y\mid u_{im}}^{-1}Cz_{im}} - \lambda = 0 ,\label{lambda_mOptCondGrt0}\\
\sigma_{im}=0\Rightarrow& {{z}^{'}_{im}C^{'}\Sigma_{y\mid u_{im}}^{-1}Cz_{im}}- \lambda  -\gamma_{im}  = 0 ,\label{lambda_mOptCond}
\end{align}
Consequently, 
{\small
\begin{align}
\sigma_{im} = \max\left(0, \frac{1}{\lambda} -\frac{1}{z^{'}_{im}C^{'} \Sigma_{y\mid u_{im}}^{-1}{(\vec{\sigma})}Cz_{im}}\right),\label{fixedpointupdate} \\
\intertext{{\normalsize where $\Sigma_{y\mid u_{im}}{(\vec{\sigma})}$ according to (\ref{Q_miDef}) is, (we here explicitly write $\vec{\sigma}$ to make the dependence on the $\sigma$ parameters explicit, and drop writing the dependency on $T$)}}
\Sigma_{y\mid u_{im}}{(\vec{\sigma})} = \sum\limits_{jk\neq im}\sigma_{jk} Cz_{jk}z_{jk}^{'}C^{'} + \Sigma_{n}.
\end{align}
}
Altogether with the power constraint (\ref{lambdaDer}) we get:
{\small
\begin{align}\label{waterfilling} 
\sum_{im} \max\left(0, \frac{1}{\lambda} -\frac{1}{z^{'}_{im} C^{'}\Sigma_{y\mid u_{im}}^{-1}(\vec{\sigma})Cz_{im}}\right)=P,
\end{align}
}
where the global 'water-line', $\lambda$, is adjusted by linear search. 
A particular control process variance, $\sigma_{im}$, given by (\ref{fixedpointupdate}), depends on all the other variances through $\Sigma_{y\mid u_{im}}^{-1}(\vec{\sigma})$. 
A solution to the equations above forms a unique optimum because of concavity. The formal solution given by (\ref{fixedpointupdate}) lends itself to implement an iterative water-filling scheme, \cite{BoydIterWF}, in the form of a two-phase fixed-point iteration. One phase updates the $\sigma_{im}$ (\ref{fixedpointupdate}), the other adapts the water-line (\ref{waterfilling}).
\begin{figure}[h!]
\centering 
\includegraphics[scale=0.35]{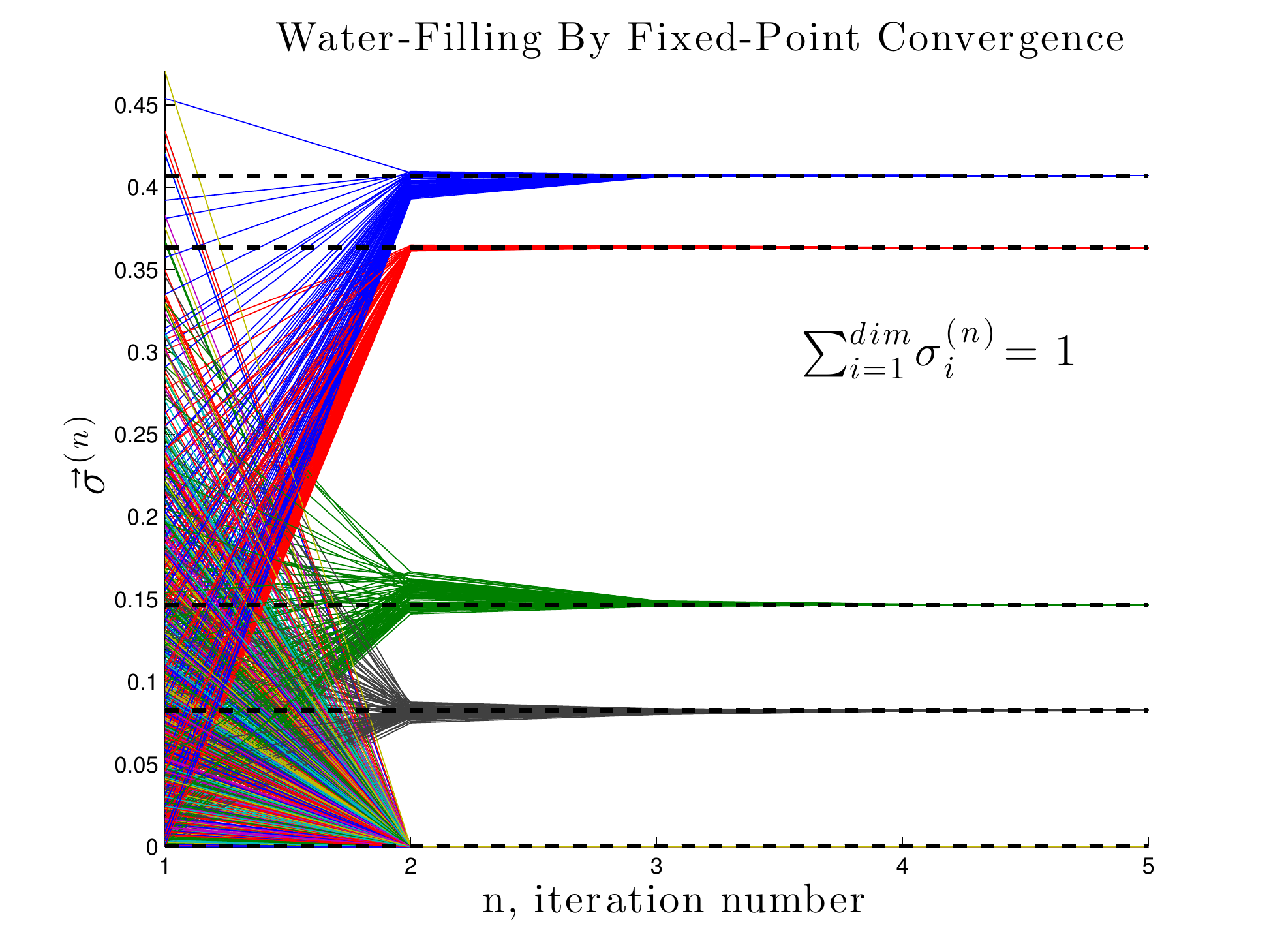} 
\caption{Demonstration of the convergence of $\vec{\sigma}^n$ to the leading components, where $n$ is an iteration number. The resulting non-zero components of the control process achieves the capacity for the given set of $z$-vectors. The total power is set to $P=1$.}
\label{SigmaCnvrg}
\end{figure} 
Fig.\ref{SigmaCnvrg} shows the typical convergence of this iteration, 'Iterative Water-Filling' scheme, for a randomly chosen dynamic system of dimension $n=10$, given by an arbitrary positive definite matrix $\Sigma_{n}\succ 0$, and an arbitrary  set of $z$-vectors. The y-axis shows the components of $\vec{\sigma}^{(n)}$, the x-axis the number of iterations, $n$. The algorithm was run 100 times for different initial conditions with the same random system matrix. The dashed-lines show the numeric solution found by an explicit convex optimization solver in comparison.
\subsection{Complementary condition for non-zero capacity}
Even though $\sigma_{im}$ is independent of $z_{im}$, it turns out that there is a complementary condition on $\sigma_{im}$ and $z_{im}$ which must be satisfied in order to get a non-zero capacity. This condition is stated and proved in the following lemma. 
\begin{lemma}\label{lambda_mOptCondLemma}
If $C$ has full rank, and $\exists(i, m)$ such that $\sigma_{im}>0$, but $z_{im}=0$, then $I\left[u(\cdot);y(T)\right] \equiv 0.$
\end{lemma}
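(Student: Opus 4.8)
The statement is understood to concern the capacity-achieving variance vector $\vec{\sigma}$, i.e.\ the solution of the water-filling system of Section~\ref{sec:OptEq} (for an arbitrary fixed set of $z$-vectors the claim would be false, since one could place power on a vanishing $z_{im}$ while another mode with $z_{jk}\neq 0$ still carries power). The plan is to show that a single ``idle but funded'' mode --- one with $\sigma_{im}>0$ yet $z_{im}=0$ --- pins the global water-line $\lambda$ at zero, that a zero water-line is incompatible with any productive mode carrying power, and hence that the whole controlled covariance $\Sigma_u(T)$ collapses to the zero matrix, which trivially gives zero capacity.

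Concretely I would proceed in three steps. \emph{Step 1.} Since $\sigma_{im}>0$, complementary slackness (\ref{KKT}) gives $\gamma_{im}=0$, so the stationarity condition (\ref{lambda_mOptCondGrt0}) holds at this index; substituting $z_{im}=0$ makes its left-hand side vanish, forcing $\lambda=0$. \emph{Step 2.} Fix any other index $(j,k)$. By (\ref{Q_miDef}), $\Sigma_{y\mid u_{jk}}(T)=C\Sigma_{u_{\widehat{jk}}}(T)C'+\Sigma_{n}(T)\succ 0$ because $\Sigma_{n}(T)\succ 0$, so $\Sigma_{y\mid u_{jk}}^{-1}(T)\succ 0$ and the denominator appearing in (\ref{lambda_mOptCondGrt0}) is strictly positive. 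If $\sigma_{jk}>0$, then (\ref{lambda_mOptCondGrt0}) with $\lambda=0$ forces the numerator $z_{jk}'C'\Sigma_{y\mid u_{jk}}^{-1}Cz_{jk}=0$, hence $Cz_{jk}=0$, hence $z_{jk}=0$ since $C$ has full rank; if $\sigma_{jk}=0$, the summand $\sigma_{jk}z_{jk}z_{jk}'$ is zero anyway. Therefore every term of $\Sigma_u(T)=\sum_{m,i}\sigma_{im}z_{im}z_{im}'$ is zero, i.e.\ $\Sigma_u(T)=0$. \emph{Step 3.} By (\ref{VarY}), $\Sigma_y(T)=C\Sigma_u(T)C'+\Sigma_{n}(T)=\Sigma_{n}(T)=\Sigma_{y\mid u(\cdot)}(T)$, so $I[u(\cdot);y(T)]=\ln(|\Sigma_y(T)|/|\Sigma_{y\mid u(\cdot)}(T)|)=\ln 1=0$.

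An equivalent, multiplier-free route is a power-transfer argument: if some $z_{jk}\neq 0$, then moving an infinitesimal $\epsilon>0$ of power from the idle mode $(i,m)$ (which leaves $\Sigma_u(T)$ unchanged, as $z_{im}=0$) to $(j,k)$ adds $\epsilon\,(Cz_{jk})(Cz_{jk})'\neq 0$ inside the numerator determinant of (\ref{MIObjective}) and strictly increases it (matrix-determinant lemma, using $Cz_{jk}\neq 0$), contradicting optimality; hence all $z_{jk}=0$ and one concludes as before. I expect the propagation in Step~2 to be the crux --- the observation that one funded-but-useless coordinate drags the water-line down to a level at which no useful coordinate can be funded --- while the positive-definiteness of the conditioned covariances, the rank argument, and the determinant-ratio bookkeeping are routine.
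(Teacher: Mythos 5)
Your proposal is correct and follows essentially the same route as the paper's own proof: the funded-but-vanishing mode forces $\lambda=0$ via the stationarity condition (\ref{lambda_mOptCondGrt0}), which in turn forces $z_{jk}=0$ for every index carrying positive power, so $\Sigma_u(T)=0$ and the mutual information vanishes. Your explicit framing that the claim holds at the KKT/capacity-achieving point (and your spelled-out use of $\Sigma_{y\mid u_{jk}}^{-1}\succ 0$ and the full rank of $C$) is a slightly cleaner presentation of the same argument, not a different one.
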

\begin{proof}
Assume $\exists(i, m)$ such that $\sigma_{im}>0$ and $z_{im}=0$. $\Sigma_{y\mid u_{im}(\cdot)}^{-1}\succ 0$ implies that $z_{im}=0$ iff $z^{'}_{im}C^{'}\Sigma_{y\mid u_{im}(\cdot)}^{-1}Cz_{im}=0$. In this case (\ref{lambda_mOptCondGrt0}) implies that if $z^{'}_{im}C^{'}\Sigma_{y\mid u_{im}(\cdot)}^{-1}Cz_{im}=0$, then $\lambda=0$. Consequently, 
\begin{align}
\forall i,m : z^{'}_{im}C^{'}\Sigma_{y\mid u_{im}(\cdot)}^{-1}Cz_{im}=0\Rightarrow \forall i,m : z_{im} = 0,\label{lemmaKey}
\end{align}
which means 
\begin{align}
\underbrace{\sum_{i,m:\sigma_{im}>0}\sigma_{im}Cz_{im}{z^{'}}_{im}C^{'}}_{\mbox{=0 due to (\ref{lemmaKey})}} + \underbrace{\sum_{i,m:\sigma_{im}=0}\sigma_{im}Cz_{im}{z^{'}}_{im}C^{'}}_{\mbox{=0 due to }\sigma_{im}=0}=0.
\end{align}
Altogether it follows that, if $\exists(i, m):\sigma_{im}>0$ and $z_{im}=0$, then
\begin{align}
I\left[u(\cdot);y(T)\right]=0. 
\end{align}
\QEDA
\end{proof}
This lemma will be useful in the next section, where we derive the set of '$z$-vectors', $\left\{z_{im}\right\}_{im}$, which satisfies the KKT optimality conditions in (\ref{varDirG}-\ref{lambdaDer}). 

\subsection{Expansion Functions - 'Gramian Decomposition'.}\label{Gramian Decomposition}
In this section we derive the optimality conditions for the expansion functions, $\{g_{im}(t)\}$, by computing the functional derivative, \cite{VarDer}, of the Lagrangian (\ref{varDirG}). Particularly, with test functions $\phi(t)_{im}$: 
{\small
\begin{align}\label{FrechetDerivative}
\forall i,m\enspace:\enspace \int_0^T \frac{\delta \mathcal{L}[g_{im}]}{\delta g_{im}}(t)\phi_{im}(t)dt = \left[\frac{d\mathcal{L}[g_{im}+\epsilon\phi_{im}]}{d\epsilon}\right]_{\Bigr|_{\epsilon=0}}. 
\end{align}
}
As shown below, the optimality conditions for $\{g_{im}(t)\}$ enable to derive the corresponding 'z-vectors', $\left\{z_{im}\right\}_{im}$, defined in (\ref{Zvectors}).  The straightforward computation of the variation in (\ref{FrechetDerivative}) gives\footnote{the Jacobi's formula, ${\displaystyle {\frac {d}{dt}}\log \left|\Sigma_y(t)\right|=\mathbf{Tr} \left\{\Sigma_y(t)^{-1}\,{\frac {d}{dt}}\Sigma_y(t)\right\}}$, is used in the derivation.}:
{\small
\begin{align}
\forall m,i,t\enspace:\enspace \frac{\delta \mathcal{L}[g_{im}]}{\delta g_{im}}(t) = &
\sigma_{im} \mathbf{Tr}\left\{\Sigma_y^{-1} Cz_{im}  w^{'}_{A, b_m}(t)C^{'} \right\}\nonumber\\
+&\sigma_{im} \mathbf{Tr}\left\{\Sigma_y^{-1} Cw_{A, b_m}(t) z_{im}^{'}C^{'} \right\}\nonumber\\
+&2\sum_{j}\nu_{mij}g_{jm}(t)=0,
\end{align}
}
which, due to the symmetry of the trace, $\mathbf{Tr}(A)=\mathbf{Tr}(A^{'})$, and $\Sigma_y^{-1}$ reduces to
{\small
\begin{align}
\sigma_{im}\mathbf{Tr} \left\{\Sigma_y^{-1} Cz_{im}  w^{'}_{A, b_m}(t)C^{'}\right\}
=-\sum_{j}\nu_{mij}g_{jm}(t).\label{gipOpt}
\end{align}
}
$\forall m,i$ and $t\in[0,T]$.
As shown in Appendix \ref{App:AppendixB}, the equation (\ref{gipOpt}) is equivalent to the following equation
\begin{align}
\sigma_{im}\left(W(A, b_m, T) -\sum_{j} z_{jm} z^{'}_{jm}\right)C^{'}\Sigma_y^{-1} Cz_{im}=0,\label{zVecEq}
\end{align}  
which will be useful in order to derive the optimal set of 'z-vectors', $\left\{z_{im}\right\}$.
{It is worth to mention while the variation in (\ref{FrechetDerivative}) is computed with respect to the expansion functions, and $\left\{g_{im}(t)\right\}$, respectively. However, to compute the capacity we do not actually need to find the optimal expansion functions explicitly, but can express the capacity exclusively in terms of the set $\left\{z_{im}\right\}$.} 

The following lemma reveals the connection between the $z$-vectors and the controllability Gramian in (\ref{GramianNotation}). This result is important, because it enables to compute the capacity efficiently,  and it strengthens the interplay between the information-theoretic properties of the dynamic system and its optimal control properties.      
\begin{lemma}\label{theoremZvectors}
Choosing the $z_{jm}$ to the eigenvectors of the controllability Gramian satisfies equation (\ref{GramianNotation}). Particularly, set
\begin{align}
\forall i\in\{1,..,n\} : z_{im}(T) =& \sqrt{\omega_{im}(T)}v_{im}(T),\label{suffCond1}\\
\forall i>n : z_{im}(T) =& 0,\label{suffCond2}
\intertext{where $\omega_{im}(T)$ and $v_{im}(T)$ are the corresponding eigenvalues and normalized eigenvectors of the controllability Gramian, $W(A, b_m, T)\mbox{ given by (\ref{GramianNotation}).}\hfill\square$}\nonumber
\end{align}
\end{lemma}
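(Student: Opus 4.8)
The plan is to verify the optimality condition (\ref{zVecEq}) directly --- the matrix algebra is essentially immediate --- and then to confirm that the proposed $z$-vectors are genuinely realizable by an admissible orthonormal family $\{g_{im}(t)\}$, so that the assignment (\ref{suffCond1})--(\ref{suffCond2}) is consistent with both the definition (\ref{Zvectors}) and the orthogonality constraint (\ref{funOrthoConstr}). Recall that (\ref{zVecEq}) was shown, via Appendix~\ref{App:AppendixB}, to be equivalent to the functional-derivative condition (\ref{gipOpt}) for the expansion functions, so it suffices to work with (\ref{zVecEq}).

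First I would invoke the spectral theorem for the controllability Gramian: since $W(A,b_m,T)$ is symmetric and, by the standing controllability assumption, positive definite, it admits $W(A,b_m,T)=\sum_{i=1}^n\omega_{im}v_{im}v_{im}^{'}$ with all $\omega_{im}>0$ and $\{v_{im}\}_{i=1}^n$ an orthonormal basis of $\mathbb{R}^n$. Substituting the candidate $z_{im}(T)=\sqrt{\omega_{im}}\,v_{im}$ for $i\le n$ and $z_{im}(T)=0$ for $i>n$ into $\sum_j z_{jm}z_{jm}^{'}$ gives exactly $\sum_{i=1}^n\omega_{im}v_{im}v_{im}^{'}=W(A,b_m,T)$, so the bracketed matrix $W(A,b_m,T)-\sum_j z_{jm}z_{jm}^{'}$ in (\ref{zVecEq}) is the zero matrix and the entire left-hand side vanishes for every $(i,m)$, regardless of $\sigma_{im}$, $C$ and $\Sigma_y$. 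Moreover $\omega_{im}>0$ forces $z_{im}\neq 0$ for $i\le n$, so this solution steers clear of the degenerate zero-capacity configuration excluded by Lemma~\ref{lambda_mOptCondLemma}.

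Second I would exhibit the orthonormal set that realizes these vectors. The natural candidate is $g_{im}(t)\doteq\omega_{im}^{-1/2}\,w^{'}_{A,b_m}(t)v_{im}$ for $i\le n$; for $i>n$ one completes $\{g_{im}(t)\}_{i\le n}$ to an orthonormal system arbitrarily, and any such choice automatically yields $z_{im}=0$ because the linear map $g\mapsto\int_0^T w_{A,b_m}(t)g(t)dt$ has range of dimension at most $n$. Using only $W(A,b_m,T)=\int_0^T w_{A,b_m}(t)w^{'}_{A,b_m}(t)dt$ from (\ref{GramianNotation}) and $W(A,b_m,T)v_{jm}=\omega_{jm}v_{jm}$, a one-line computation gives $\int_0^T g_{im}(t)g_{jm}(t)dt=\omega_{im}^{-1/2}\omega_{jm}^{-1/2}v_{im}^{'}W(A,b_m,T)v_{jm}=\delta_{ij}$ and $\int_0^T w_{A,b_m}(t)g_{im}(t)dt=\omega_{im}^{-1/2}W(A,b_m,T)v_{im}=\sqrt{\omega_{im}}\,v_{im}=z_{im}(T)$; continuity of the $g_{im}$ on $[0,T]$ follows from the analyticity of $t\mapsto e^{A(T-t)}b_m$.

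I do not anticipate a genuine obstacle: (\ref{zVecEq}) holds simply because the two matrices inside its bracket coincide, and the only step that needs more than a substitution is the realizability claim, which amounts to recognizing $(\{v_{im}\},\{g_{im}\})$ as the left and right singular systems of the compact integral operator $g\mapsto\int_0^T w_{A,b_m}(t)g(t)dt$, whose associated ``covariance'' operator is $W(A,b_m,T)$ itself. I would close by noting that, combined with the generalized water-filling solution for $\vec{\sigma}$ of the preceding subsection, this lemma collapses the infinite-dimensional optimization onto the finite spectral decomposition of $W(A,b_m,T)$ --- precisely the structural reduction announced before the statement.
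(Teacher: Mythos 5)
Your proposal is correct, and its core step is the same as the paper's: substituting $z_{im}=\sqrt{\omega_{im}}v_{im}$ (and $z_{im}=0$ for $i>n$) makes $\sum_j z_{jm}z^{'}_{jm}$ reproduce $W(A,b_m,T)$ exactly, so the bracketed matrix in (\ref{zVecEq}) vanishes identically and the optimality condition holds for all $(i,m)$. Where you differ is in emphasis, and the difference is to your credit. The paper's proof spends its effort on a case analysis of the four ways (\ref{zVecEq}) can be satisfied ($\sigma_{im}=0$, $z_{im}=0$, the bracket vanishing, or $C^{'}\Sigma_y^{-1}Cz_{im}$ lying in the null space of the bracket), uses Lemma \ref{lambda_mOptCondLemma} to exclude the degenerate zero-capacity configurations, and then simply asserts that any rank-one decomposition of the Gramian --- in particular the spectral one --- solves the equation. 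It never checks that the chosen $z$-vectors are actually attainable through the definition (\ref{Zvectors}) by an orthonormal family $\{g_{im}\}$ satisfying (\ref{funOrthoConstr}). You supply exactly that missing step, exhibiting $g_{im}(t)=\omega_{im}^{-1/2}w^{'}_{A,b_m}(t)v_{im}$ and verifying both orthonormality and $\int_0^T w_{A,b_m}(t)g_{im}(t)\,dt=z_{im}$ via $Wv_{im}=\omega_{im}v_{im}$; this is the singular-system structure of the integral operator and it is what makes the lemma a statement about the original optimization rather than only about the matrix identity. One small imprecision: for $i>n$ the reason $z_{im}=0$ is not merely that the map $g\mapsto\int_0^T w_{A,b_m}(t)g(t)\,dt$ has rank at most $n$, but that completing $\{g_{im}\}_{i\le n}$ orthonormally forces the added functions to be orthogonal to $\mathrm{span}\{t\mapsto w^{'}_{A,b_m}(t)v\}$, which is precisely the kernel condition; the conclusion stands.
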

\begin{proof}[Proof of lemma \ref{theoremZvectors}].
Equation (\ref{zVecEq}) is satisfied when at least one of the following cases hold: a) $\sigma_{im}=0$, b) $z_{im}=0$, c) $\tilde W=W(A, b_m, T)-\sum_jz_{jm}z^{'}_{jm}=0_{n\times n}$, d) $C^{'}\Sigma_y^{-1} Cz_{im}\in \mathrm{Null}(\tilde W)$.  W.l.o.g.\ there exists at least one $\sigma_{im}>0$, otherwise the capacity is zero. According to Lemma \ref{lambda_mOptCondLemma}, and assuming $C$ has full rank, the corresponding $z_{im}\neq 0$, and, consequently, $C^{'}\Sigma_y^{-1}Cz_{im}\neq 0$ too, because $\Sigma_y\succ 0$. In the case 'd)' the entire null space of $\tilde W$ is a solution to (\ref{zVecEq}). In the case 'c)' the solution is a decomposition of the Gramian into the sum of 1-rank matrices $z_{im}z_{im}^{'}$: 
\begin{align}
W(A, b_m, T)=\sum_jz_{jm}(T)z^{'}_{jm}(T).\label{relationOfZtoW}
\end{align}
The decomposition of a positive definite matrix to a sum of 1-rank matrices is not unique, and each of them solves (\ref{zVecEq}). A special case is the eigenvalue decomposition:
\begin{align}
W(A, b_m, T) =& V_{m}(T)\Lambda_{m}(T) V_{m}^{'}(T)\\
 =& \sum_{i=1}^n\omega_{im}(T)v_{im}(T)v^{'}_{im}(T)
\end{align}
with $\forall m:\omega_{im}(T)>0$, and $V(T)V^{'}(T)=I_{n\times n}$. All $\omega_{im}(T)>0$ is because the controllability Gramian is assumed to be of full rank. Choosing the $z_{jm}$ as eigenvectors of $W(A, b_m, T)$ completes the proof.
%
\QEDA
\end{proof}
\subsection{'Water Filling with Gramian Decomposition'}
Combining Lemma \ref{theoremZvectors} with (\ref{fixedpointupdate}), the water-filling expression for the expansion variances appears as, for $i\in\{1,..,n\}, m\in\{1,..,p\}$: 
{\small
\begin{align}
\sigma_{im} = \max\left(0,\frac{1}{\lambda} - \frac{1}{ \omega_{im}(T)v^{'}_{im}(T)C^{'}\Sigma_{y\mid u_{im}}^{-1}(\vec{\sigma})C v_{im}(T)  }\right),
\end{align}
}
where the 'water-line' is adjusted to satisfy the total power constraint,
\begin{align}
\sum_{i=1}^n\sum_{m=1}^p \sigma_{im} = P.
\end{align}
{ The capacity will in general depend on the time horizon, $T$. In the following section, we show that the capacity, $C(T)$, is limited by a finite value for $T\rightarrow \infty$:}
\begin{align}
\lim_{T\rightarrow \infty}C^{*}(T) = C^{\infty} < \infty.
\end{align}
And, the capacity is linear in time for $T\rightarrow 0$: 
\begin{align}
\lim_{T\rightarrow 0}C^{*}(T) \approx cT,
\end{align}
where $c$ is a positive constant which is derived from the properties of the dynamic system.
\section{Asymptotic Analysis}\label{sec:AsymAna}
\subsection{Infinite time horizon}
The total uncontrolled noise covariance matrix, $\Sigma_n(T)$, in (\ref{totNoise}) diverges for $T\rightarrow \infty$, due to the unlimited growth of the sensor noise, $\Sigma_{\nu}(T)$ (\ref{sensorCov}), with time, $T$, which means that the channel capacity would go to zero. More refined statements on the channel capacity for $T\rightarrow\infty$ in the open-loop case are possible, if one limits oneself to a perfect sensor, i.e. considering the special case $y(t)=x(t)$. We found that empowerment is finite both in stable systems, where all the eigenvalues of $A$ are negative, and in unstable systems, where all the eigenvalues of $A$ are positive.  
\subsubsection{Stable Systems}
In this section we provide the asymptotic analysis of the stable systems. When the system is stable, then the process noise variance matrix in (\ref{NoiseGramian}), (the noise Gramian), and the control controllability Gramian in (\ref{GramianNotation}) are finite for $T\rightarrow \infty$ and we can write:
\begin{align}
\Sigma_{\eta}(A, G, \infty) =& \sigma_{\eta}\int_0^{\infty} e^{A(T-t)}GG^{'}e^{A^{'}(T-t)}dt,\\
W(A, b_m,\infty) =& \int_0^{\infty}e^{A(T-t)}b_mb_m^{'}e^{A^{'}(T-t)}dt, 
\end{align}
and, they can be found analytically, (see for details e.g., Theorem 6.1 in \cite{LinSys}), by solving the corresponding continuous-time Lyapunov equations:
\begin{align}
A\Sigma_{n}(A, G, \infty) + \Sigma_{n}(A, G, \infty)A^{'} =& -GG^{'},\label{NoiseGramLyapunov}\\
\forall m:AW(A, b_m, \infty) + W(A, b_m, \infty)A^{'} =& -b_mb_m^{'}\label{ContrGramLyapunov},\\
W(A, b_m, \infty) = \sum_{i=1}^n\omega_{im}(\infty)v_{im}(\infty)&v^{'}_{im}(\infty).
\end{align}
In this case, the water-filling solution is given by:
{\small
\begin{align}
\sigma_{im}&(\infty) =\nonumber\\ 
&=\max\left(0,\frac{1}{\lambda} - \frac{1}{ \omega_{im}(\infty)v^{'}_{im}(\infty)\Sigma_{y\mid u_{im}}^{-1}(\infty) v_{im}(\infty)}\right),\label{WFS1}
\end{align}
}
This provides the asymptotic capacity, $C(\infty)$, which is finite for any finite power power constraint, $P$, where the noise variance matrix, $\Sigma_{n}(A, G, \infty)$, and the controllability matrix completely defines the capacity. 
The next section deals with the unstable systems, where we show that the asymptotic capacity is finite as well.
\subsubsection{Unstable Systems}
In this section we show that the asymptotic capacity is finite as well, if all the eigenvalues of $A$ are positive.
We need the following lemma in the asymptotic analysis of the capacity, $C(\infty)$, in systems where all the eigenvalues of $A$ are positive.
\begin{lemma}
The following noise and control Gramians,
\begin{align}
\tilde{\Sigma}_{\eta}(A, G, T) =& \sigma_{\eta}\int_0^{T} e^{-At}GG^{'}e^{-A^{'}t}dt,\label{GramianNoiseTilde}\\
\tilde{W}(A, b_m, T) =& \int_0^{T}e^{-At}b_mb_m^{'}e^{-A^{'}t}dt,\label{GramianControlTilde} 
\end{align}
define the same mutual information objective in (\ref{MIObjective}), as the original Gramians, $\Sigma_{n}(A, G, T)$ and $W(A, b_m, T)$ do.
\end{lemma}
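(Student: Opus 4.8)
The plan is to exploit a change of time variable in the defining integrals together with the fact that the mutual information objective \eqref{MIObjective} depends on $\Sigma_n(T)$ and $W(A,b_m,T)$ only through the ratio of determinants appearing there, a quantity that is invariant under a common orthogonal (or, more generally, unimodular) change of basis and under any reparametrization of the integration path that leaves the Gramians themselves unchanged. Concretely, in the integrals defining $\Sigma_{\eta}(T)$ in \eqref{NoiseGramian} and $W(A,b_m,T)$ in \eqref{GramianNotation}, substitute $s = T-t$, so that $ds = -dt$ and the limits flip: $\int_0^T e^{A(T-t)}(\cdot)e^{A'(T-t)}dt = \int_0^T e^{As}(\cdot)e^{A's}ds$. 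This rewrites the original Gramians with the integrand $e^{As}(\cdot)e^{A's}$ rather than $e^{A(T-t)}(\cdot)e^{A'(T-t)}$, but they are literally the same matrices.

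Next I would observe that $e^{As}$ and $e^{-As}$ differ only by the sign of $A$, i.e. replacing $A$ by $-A$. So the tilded Gramians $\tilde\Sigma_{\eta}(A,G,T)$ and $\tilde W(A,b_m,T)$ of \eqref{GramianNoiseTilde}--\eqref{GramianControlTilde} are exactly $\Sigma_{\eta}(-A,G,T)$ and $W(-A,b_m,T)$ after the above substitution. The point of the lemma is that, although $A \mapsto -A$ is \emph{not} an innocuous change (it is precisely what turns an unstable system into a stable one), it does not alter the \emph{value} of the objective \eqref{MIObjective}, because that objective is built from $z$-vectors and the noise covariance in a way that only sees the Gramians as quadratic forms. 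More precisely, by Lemma~\ref{theoremZvectors} the optimal $z$-vectors are the scaled eigenvectors of the controllability Gramian, so the capacity is a function of the \emph{spectra} $\{\omega_{im}(T)\}$ of $W(A,b_m,T)$ and of $\Sigma_n(T)$ expressed in the eigenbasis; and the water-filling value \eqref{waterfilling} is invariant under any simultaneous congruence of $\Sigma_n$ and $\{Cz_{im}z_{im}'C'\}$. I would therefore show that the map taking the quadruple $(e^{A(T-t)}B,\,e^{A(T-t)}G)$ to $(e^{-At}B,\,e^{-At}G)$ induces precisely such a congruence of the data entering \eqref{MIObjective}, hence leaves $\ln(|\Sigma_n + \sum \sigma_{im}Cz_{im}z_{im}'C'|/|\Sigma_n|)$ and its maximum over $\vec\sigma$ unchanged.

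The remaining step is to handle the sensor-noise term $\Sigma_\nu(T)=T\sigma_\nu$ in $\Sigma_n(T)$, which is untouched by the time substitution, so it simply carries through identically on both sides; and to note that the orthonormality constraints \eqref{funOrthoConstr} on the expansion functions transform covariantly under $t\mapsto T-t$ (which is an isometry of $L^2([0,T])$), so the feasible set is preserved. Assembling these pieces — time-reversal of the integration variable, the resulting $A\mapsto -A$ identification, congruence-invariance of the determinant ratio, and invariance of the sensor term and the orthonormality constraints — yields that the two sets of Gramians define the same objective, and a fortiori the same capacity. I expect the main obstacle to be bookkeeping: making the congruence-invariance argument fully rigorous requires being careful that the \emph{same} change of basis simultaneously diagonalizes (or at least congruently relates) both $W$ and the relevant restriction of $\Sigma_n$, and that the infinite expansion in $i$ truncates consistently on both sides (which is exactly the content of the Remark following \eqref{KKT} and Lemma~\ref{theoremZvectors}); once that structural point is in hand the rest is a one-line substitution.
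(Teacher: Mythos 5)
Your core mechanism is the paper's own: from $e^{A(T-t)}=e^{AT}e^{-At}$ one gets $\Sigma_{\eta}(A,G,T)=e^{AT}\tilde{\Sigma}_{\eta}(A,G,T)e^{A^{'}T}$ and $W(A,b_m,T)=e^{AT}\tilde{W}(A,b_m,T)e^{A^{'}T}$, so the two families of Gramians (and the $z$-vectors, via $z_{im}=e^{AT}\tilde{z}_{im}$) are related by the single common congruence $X\mapsto e^{AT}Xe^{A^{'}T}$, which cancels inside the determinant ratio --- the paper phrases this as similarity invariance of $\ln\bigl|I+\Sigma_{\eta}^{-1}\sum_m W\bigr|$ in (\ref{MIObjective2}), you phrase it as congruence invariance of $|\Sigma_n+\Sigma_u|/|\Sigma_n|$; these are the same argument. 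The detours are unneeded and one is misleading: the substitution $s=T-t$ only turns $e^{A(T-t)}$ into $e^{As}$, not $e^{-As}$ (it is the factorization of the exponential, not time reversal, that produces the congruence matrix $e^{AT}$), and the claim that the capacity is a function of the \emph{spectra} of the Gramians is not correct as stated, since the objective depends on the relative orientation of $W$ and $\Sigma_n$; fortunately you do not rely on it, falling back on the congruence argument, which is sound.

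The genuine gap is your final paragraph on the sensor noise. The lemma and the paper's proof are read under the perfect-sensor restriction $y(t)=x(t)$ announced immediately before the asymptotic analysis, so that $\Sigma_n=\Sigma_{\eta}$ and the congruence closes. If the additive term $\Sigma_{\nu}(T)=T\sigma_{\nu}$ of (\ref{totNoise}) is retained, the argument breaks: the congruence $M=e^{AT}$ that maps $\tilde{\Sigma}_{\eta}$ to $\Sigma_{\eta}$ and $\tilde{z}_{im}$ to $z_{im}$ does not fix the identity block, i.e. $e^{AT}\bigl(T\sigma_{\nu}I\bigr)e^{A^{'}T}\neq T\sigma_{\nu}I$ unless $e^{AT}$ is orthogonal, so $\Sigma_n$ and $\tilde{\Sigma}_n$ are \emph{not} related by the same congruence as the control terms, and the two determinant ratios genuinely differ. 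Saying the sensor term ``carries through identically on both sides'' describes the definitions, not the objectives; to close the proof you must either invoke the perfect-sensor assumption explicitly, as the paper does, or accept that the equivalence fails in the partially observed case.
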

\begin{proof}
For the optimal set of $\{z_{im}\}$ in (\ref{suffCond1}) and under the perfect sensor assumption above, the objective in (\ref{MIObjective}) appears as:
{\small
\begin{align}
\ln\Biggl(\Biggl|I_{n\times n} + \Sigma_{\eta}^{-1}(A, G, T) \sum_{m=1}^p W(A, b_m, T)\Biggr|\Biggr)\label{MIObjective2},
\end{align}
}
where $\Sigma_{\eta}(A, G, T)^{-1}$ and $W(A, b_m, T)$ can be represented by:
{\small
\begin{align}
\Sigma_{\eta}(A, G, T)^{-1} =& \left(\int_0^T e^{A(T-t)}GG^{'}e^{A^{'}(T-t)}dt\right)^{-1}\\
=& \left( e^{A^{'}T}\right)^{-1}\left(\int_0^T e^{-At}GG^{'}e^{-A^{'}t}dt \right)^{-1}\left(e^{AT}\right)^{-1}\nonumber\\
=&\left( e^{A^{'}T}\right)^{-1} \tilde{\Sigma}_{\eta}(A, G, T)^{-1}\left(e^{AT}\right)^{-1},
\end{align}
and
\begin{align}
W(A, b_m, T) =& \left(e^{AT}\right)\int_0^Te^{-At}b_mb_m^{'}e^{-A^{'}t}dt\left( e^{A^{'}T}\right)\\
=&\left(e^{AT}\right)\tilde{W}(A, b_m, T)\left( e^{A^{'}T}\right).
\end{align}
}
Consequently, the expression in (\ref{MIObjective2}) is equivalent to:
{\small
\begin{align}
\ln\Biggl(\Biggl|I_{n\times n} + \tilde{\Sigma}_{\eta}^{-1}(A, G, T) \sum_{m=1}^p \tilde{W}(A, b_m, T)\Biggr|\Biggr)\label{MIObjective3},
\end{align}
}
\QEDA
\end{proof}
When the system is unstable (i.e.\ all eigenvalues of $A$ are positive), then the controllability Gramians in (\ref{GramianNoiseTilde}) and (\ref{GramianControlTilde}) for $T\rightarrow \infty$ can be found analytically by solving the corresponding continuous time Lyapunov equations:
\begin{align}
A\tilde{\Sigma}_{\eta}(A, G, \infty) + \tilde{\Sigma}_{\eta}(A, G, \infty)A^{'} =& GG^{'}, \label{GramianNoiseTilde}\\
\forall m:A\tilde{W}(A, b_m, \infty) + \tilde{W}(A, b_m, \infty)A^{'} =& b_mb_m^{'},\label{GramianControlTilde}
\end{align}
and, the 'water-filling' solution in (\ref{WFS1}) is computed with $W=\tilde{W}$, and $\Sigma_{n}=\tilde{\Sigma}_{n}$. 

\subsection{Infinitesimal time horizon}
For infinitesimal times, empowerment becomes a linear function of the horizon T, with a constant that depends on various parameters of the system. In particular, for $T\rightarrow 0$, empowerment vanishes again.

Particularly, the linearized noise covariance matrix in (\ref{totNoise}) appears as:
{\small
\begin{align}
\Sigma_{n}(T\ll 1) =& \left(\sigma_{\eta}Ce^{A0}GG^{'}e^{A^{'}0}C^{'}+\sigma_{\nu}\mathbf{I}\right)T\nonumber\\=&  \left(\sigma_{\eta}CGG^{'}C^{'}+\sigma_{\nu}\mathbf{I}\right)T,
\end{align}
}
which follows from the approximation of the integral in (\ref{NoiseGramian}) for $T\ll 1$.  Similarly, the control process covariance matrix (\ref{ControlGramian}) for $T\ll 1$ is:
{\small
\begin{align}\label{ControlGramian2}
\Sigma_u(T\ll 1) =& \sum_{m=1}^p\sum_{i=1}^n\sigma_{im} e^{A0}b_mg_{im}(0)e^{A^{'}0}b_m^{'}g_{im}(0)T^2\\
=&\left(\sum_{m=1}^p\sum_{i=1}^n\sigma_{im}b_mg_{im}(0)b_m^{'}g_{im}(0)\right)T^2,
\end{align}
}
which is quadratic in time. Consequently, the capacity for $T\ll 1$ appears as
{\small
\begin{align}
C(T\ll 1) =& \ln\left|I_{n\times n} + \left(\Sigma_{n}(T\ll 1)\right)^{-1}\Sigma_u(T\ll 1)\right|\\
=& \ln\left|I_{n\times n} + M_{n\times n}T\right| \approx \mathbf{Tr}\left(M\right)T,
\end{align}
}
where $\scriptstyle{M=\left(\sigma_{\eta}CGG^{'}C^{'}+\sigma_{\nu}\mathbf{I}\right)^{-1}\left(\sum\limits_{m=1}^p\sum\limits_{i=1}^n\sigma_{im}b_mg_{im}(0)b_m^{'}g_{im}(0)\right)}$ is a constant matrix.
}
\section{Simulations}\label{sec:Simul} 
Using the developed method, we demonstrate by computer simulations the relationship between empowerment and the intrinsic properties of linear dynamic systems. 

In the simulation we run the system for the fixed duration of $T$ seconds, however, we apply control only for the first $t$ seconds where $t\in[0,T]\mathrm{sec}$. That way, one can identify at which time scales a control signal has effect on the observed behaviour. 


The systems studied here were specified via their pole-zero map and are shown in Fig.\ref{FigCTPplan}.  Each system, (marked by a distinct color), is given by three pairs of the conjugate poles, $\scriptstyle{(\{p_1, p_1^{\dagger}\}, \{p_2, p_2^{\dagger}\}, \{p_3, p_3^{\dagger}\})}$, and the zero, $z$, at the origin. The poles of each system are located at increasing distances from the imaginary axis, which is equivalent to decreasing the {\it time constant}, $\tau$, of the system. The time constant is an intrinsic property of a dynamic system, which reflects the speed and the amplitude of the input signal propagation through the system. Increasing distances from the imaginary axis is equivalent to increasing the time constant, $\tau$, of the system.
\begin{figure}[h!]
\centering 
\includegraphics[scale=0.5]{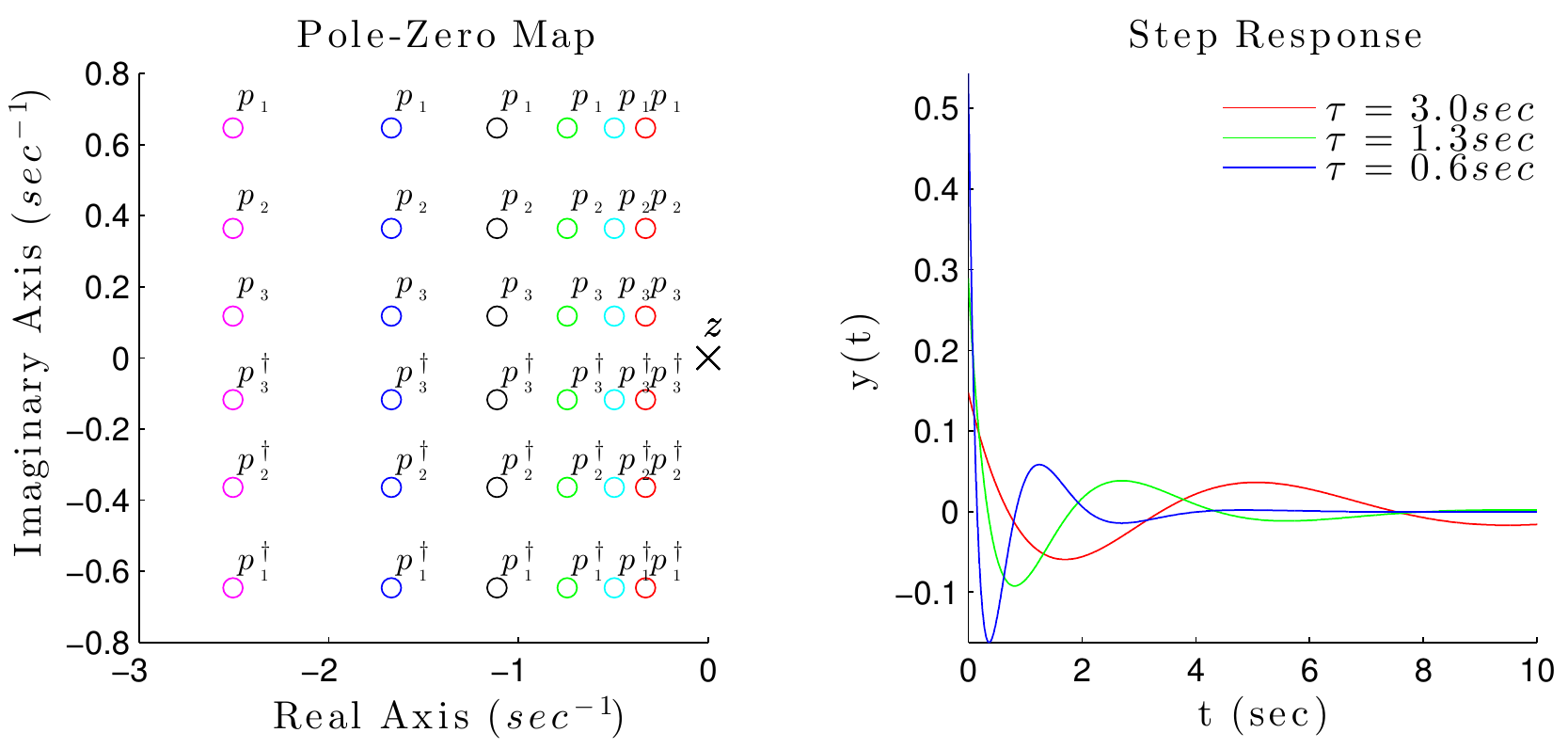}
\caption{{\bf Left plot}: Zero-pole maps of the continuous-time dynamic systems of order $6$. {\bf Right plot}: the step response of the systems in red, green, and  blue with the corresponding time constants, $\tau=3sec$, $\tau=1.3sec$, and $\tau=0.6sec$, respectively.}
\label{FigCTPplan}
\end{figure} 

For the systems shown in Fig.\ref{FigCTPplan}, we compute the empowerment landscapes, $\scriptstyle{C(P, T)}$, for $\scriptstyle{T\in[0,10]}$ seconds, and $\scriptstyle{P\in[0,10]}$ Watts, which are shown at Fig.\ref{CPT}. It is seen that empowerment is a monotonic function of the power for a given time, and it has a distinct maximum, (bright red region in the picture) as function of $T$. The time of empowerment maximum decreases with the decrease of the time constant, $\tau$. 

The physical interpretation of this effect is that only a limited time window of the input control signal contributes to the input-output mutual information. While, beyond this time window there is no effective contribution to the capacity. Otherwise, the capacity would have grown without limit with the length of the control signal. This time window depends on the correlation decay time of linear control system.
\begin{figure}[h!]
\centering 
\includegraphics[scale=0.45]{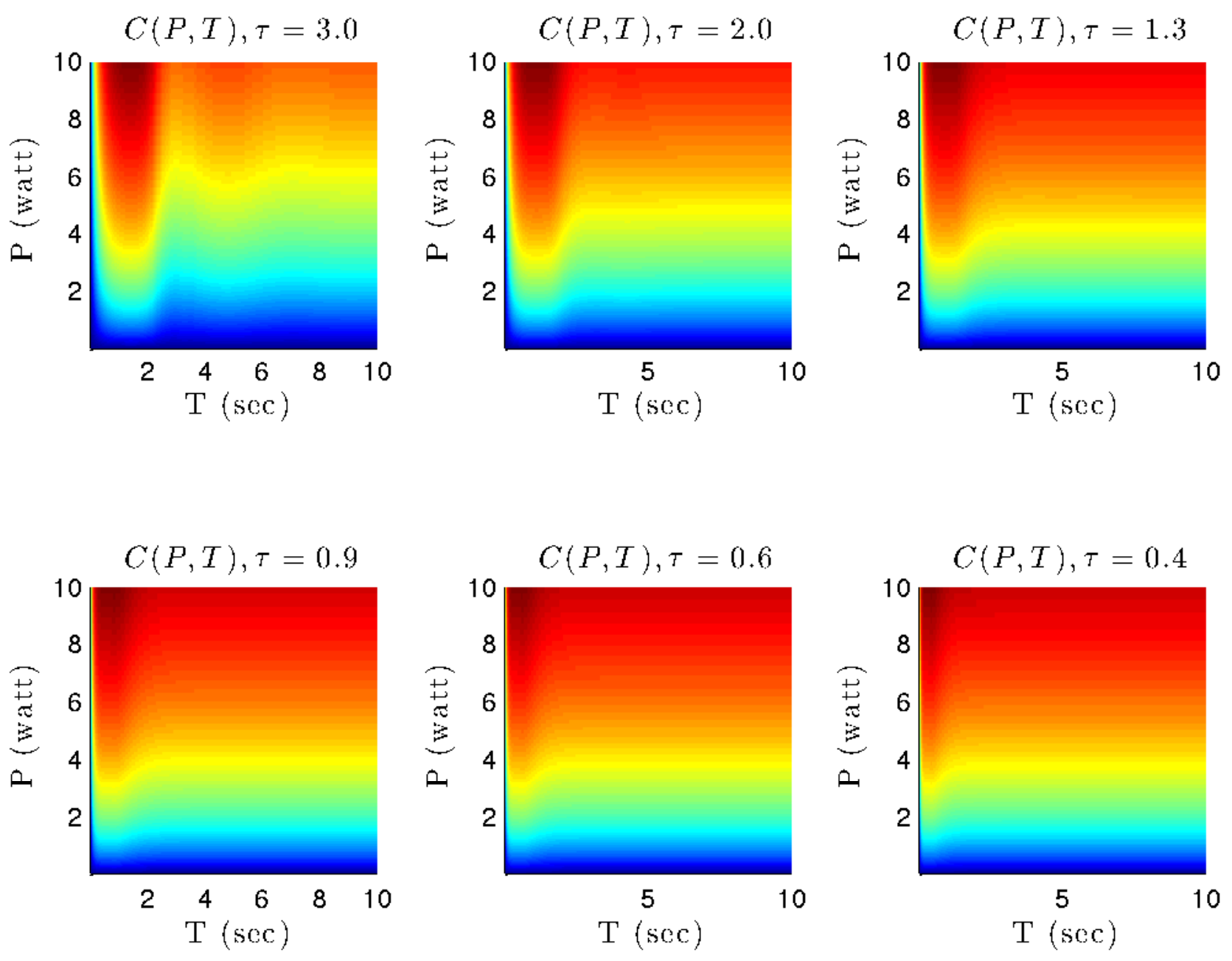}
\caption{The empowerment landscapes for the systems shown at Fig.\ref{FigCTPplan}. The blue and the red colors correspond to the low, ($C\rightarrow 0$ bits), and the high, ($C\sim 3$ bits), values of the empowerment, respectively.}
\label{CPT}
\end{figure} 
To zoom in into the empowerment landscape we consider the horizontal and the vertical slices of ${C(P, T)}$ at ${P=5}W$, and ${T=5}sec$, respectively. As seen in Fig.\ref{Slice}, the graphs of ${C(P, T)}$ become more similar to each other when the time constant, $\tau$, decreases. To elucidate this effect we parametrized empowerment by the time constant, $\tau$, and simulated ${C_{\tau}(P, T)}$ for a range of the time constant. 
\begin{figure}[h!]
\centering 
\includegraphics[scale=0.5]{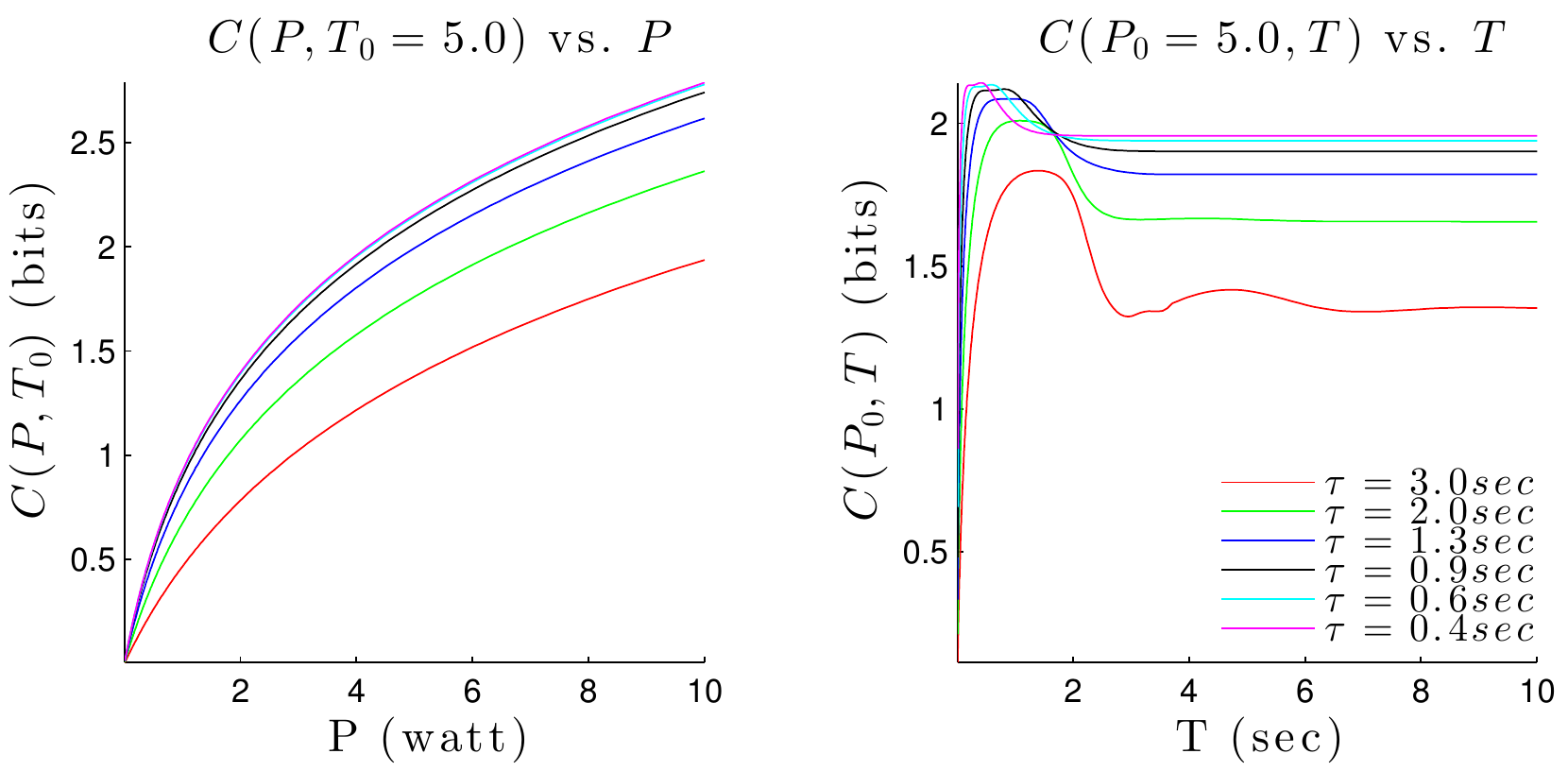}
\caption{The horizontal and the vertical slices of $C(P, T)$ from Fig.\ref{CPT} at $T=5sec$ and $P=5W$, respectively.}
\label{Slice}
\end{figure} 

As shown in Fig.\ref{CPT_TauConverge}, ${C_{\tau}(P, T)}$ converges to the finite value, ${C_{\tau}(T)\sim1.98}{bits}$ for  $\tau\rightarrow 0$. This behaviour, (the convergence to a finite capacity for decreasing $\tau$), is qualitatively similar for different orders of dynamic systems and for different power, $P$. It is notable that the same value is achieved when $\tau \rightarrow  0$, independently of the overall control time. Mathematically, the ratio of the noise Gramian to the control Gramian in (\ref{MIObjective2}) converges to a constant for $\tau\rightarrow 0$:
{\small
\begin{align}
\underset{\tau \rightarrow 0}{\lim}\;\Sigma_{\eta}^{-1}(A_{\tau}, G, T) \sum_{m=1}^p W(A_{\tau}, b_m, T)=\sigma_{\eta}\cdot(GG^{'})^{-1}\cdot \sum_{m=1}^pb_mb_m^{'}.
\end{align}
}

The physical interpretation of this effect is as following. The larger the distance of a pole from the imaginary axis is, the broader the frequency response, and simultaneously, the lower the gain at a corresponding frequency is. Moving the poles away from the imaginary axis is equivalent to increasing the system damping (or, equivalently, decreasing the time constant, $\tau$). There is a frequency window (as opposite to the abovementioned time window) in which the input control signal that can contribute to the value of empowerment, when the damping of the system is increasing. Empowerment achieves a finite value, as demonstrated in Fig.\ref{CPT_TauConverge}, even if $\tau\rightarrow 0$. 

\begin{figure}[h!]
\centering 
\includegraphics[scale=0.35]{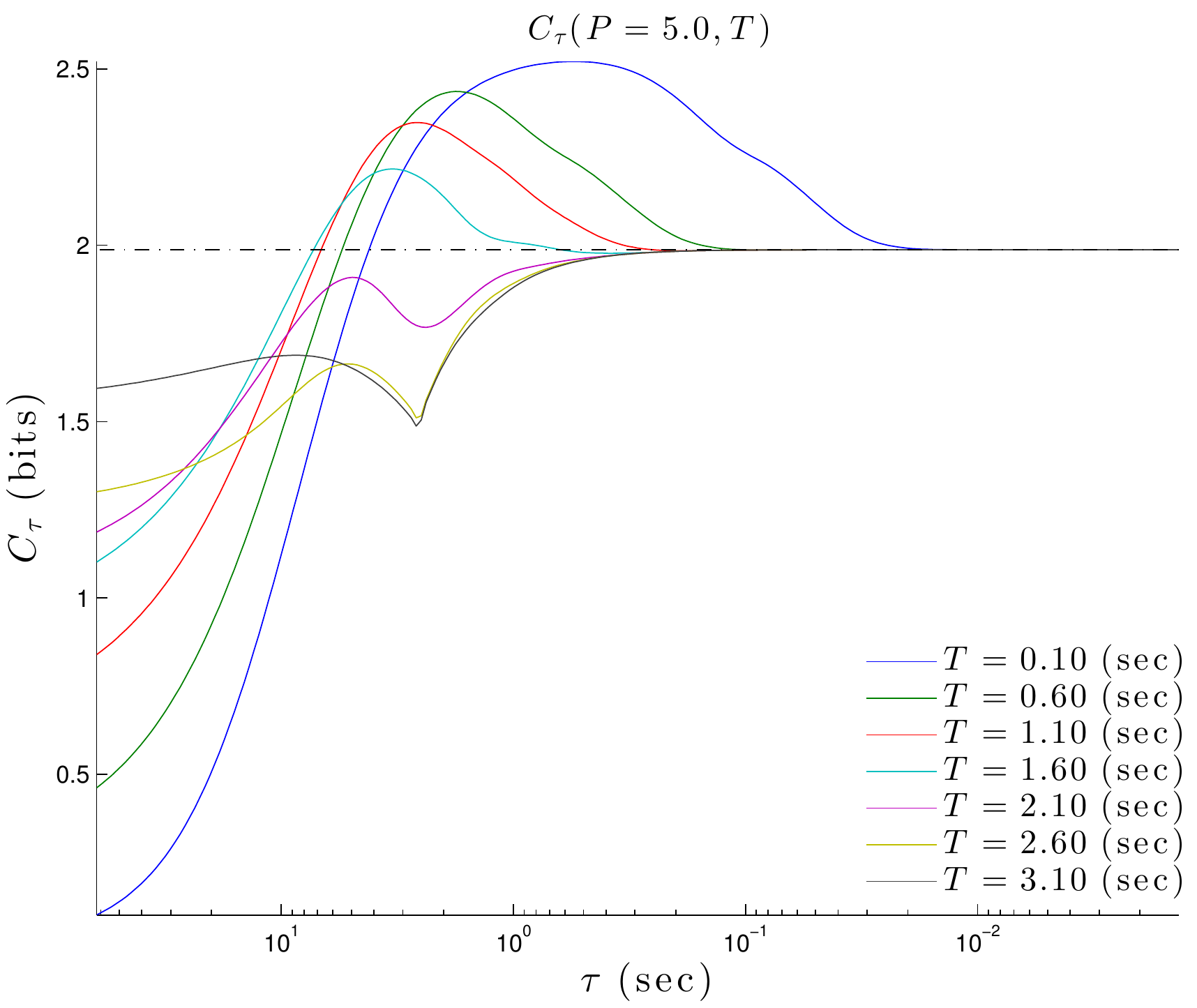}
\caption{Plot of $C_{\tau}(P, T)$. The x-axis is actually showing smaller $\tau$ as progressing to the right, ($\tau\rightarrow 0$). The dashed-line represents $C_{\tau\rightarrow 0}(P, T)$}. 
\label{CPT_TauConverge}
\end{figure} 

\section{Summary}\label{sec:Summ}
In this work we explore the information channel between the control signal to the future outputs of a stochastic linear control system. We derive an efficient method for computing the maximal mutual information, the capacity, between the control signal and the system future output. This can be viewed as solving the dual question to the paper by Tatikonda \& Mitter on control under communication constraints \cite{InfoContr1}. 

The capacity of the information channel between the control and the future output, known as empowerment, was shown in earlier work to be useful in substituting problem-specific utilities in established AI \& Control benchmarks,  \cite{Empowerment1,  Empowerment3, Empowerment4, Empowerment6,   Empowerment9}.

We develop an efficient method to estimate the value of empowerment in continuous-time linear stochastic control systems. We explore empowerment in partially observable systems, where the distinguishable control (actuator) signals are evaluated from the imperfect sensor signal. Empowerment can be - among other - interpreted as how well the control signals are distinguishable via the imperfect sensors at a future time; or, equivalently, how many distinguishable (with respect to the imperfect sensor) futures can be selected by the control plant. 

The results of this work elucidate the effect of the finite length of the input history, which holds the relevant information about the future. This is demonstrated by the constant capacity for growing $T$, as seen in Fig.\ref{Slice}. We demonstrate this using the example of stochastic dynamic systems, where we computed explicitly the control capacity and demonstrated the method by  computer simulations. 

The results of this work are derived for linear systems. However, it makes the approximative computation of this channel capacity accessible also for nonlinear systems via linearization. Specifically, non-linear dynamics can be approximated locally, comprising a linear time-variant, (LTV), dynamics. However, in the case of LTV dynamics, one can not in general derive a closed-form analytic expression, as in the case of a LTI system. 

The established method for the computation of empowerment in continuous-time linear control systems provides important insights to the understanding of the information processing in linear control systems. Particularly, we show that there is a limited time of control signal history which contributes effectively to the input-output mutual information. This time window defines an maximal length of the control signal, where the control affect the future output of the system.

The existence of a finite-sized effective time window for achieving the maximal capacity has a broader interpretation beyond the engineering interpretation for the systems: Optimized systems need, in given settings, only a limited history in order to make valuable predictions, required for the survival of an organism or device and/or for the accumulation of a value, \cite{InfoDecisions,InfoDecisions2,StillThermoPred}. In this work we demonstrate the {\it effect of the length the past window} on the example of stochastic dynamic system, and compute the capacity. 

We show that increasing the damping (or, equivalently, decreasing the time constant, $\tau$,) of a continuous-time linear control system does not result in zero empowerment. The reason for a non-zero value of empowerment for $\tau\rightarrow 0$ is the dependency of the mutual information on the Gramian ratio, (\ref{MIObjective2}). The ratio achieves a finite value for $\tau\rightarrow 0$.   
Technically, we show that the value of empowerment in continuous-time linear control systems can be computed without having to derive explicitly the function set of the Gaussian control process expansion, $\{g_i(t)\}_i,\;\;\forall t\in[0,T]$. Rather, it is sufficient to derive the '$z$-vectors', $\{z_i\}$, by the spectral decomposition of the controllability Gramian. In particular, there may exist different sets of the expansion functions satisfying the set of integral equations in (\ref{suffCond1}) and (\ref{suffCond2}). These degrees-of-freedom may be exploited for designing dynamic control systems with different energetic and/or complexity constraints, on the expansion function of the control signal. The optimal variances, $\{\sigma_i\}_i$, of the bi-orthonormal expansion of the Gaussian process (\ref{controlBiExt}) can be found efficiently by the iterative water-filling procedure.
The following direction of this work is a design of algorithms based on the empowerment method for the control of continuous-time linear systems and its extension to approximate empowerment in non-linear systems.
  
\section*{Acknowledgment}
The authors thank ICRI-CI, the Israel Science Foundation, and the Gatsby Charitable Foundation. The second author was supported in part by the EC Horizon 2020 H2020-641321 socSMCs FET Proactive project and the H2020-645141 WiMUST ICT-23-2014 Robotics project (Grant agreement no: 645141, Strategic objective: H2020 - ICT-23-2014 - Robotics).

\begin{appendices}
\section{Proof of Equation \ref{zVecEq}}\label{App:AppendixB}
\begin{proof}
Find $\nu_{mik}$ for all $k$ by $a)$ multiplying (\ref{gipOpt}) by $g_{km}(t)$, $b)$ integrating the result over $t\in[0,T]$, and $c)$ applying the orthonormality constraint, as following: 
{\small
\begin{align}
a)\,\sigma_{im}\mathbf{Tr}  \Bigl\{\Sigma_y^{-1} Cz_{im} w^{'}_{A, b_m}(t)C^{'}&g_{km}(t)\Bigr\}+\nonumber\\ 
&+\sum_{j}\nu_{mij}g_{jm}(t)g_{km}(t)=0\nonumber,
\end{align}
\vspace{-0.75cm}
\begin{align}
b)\,\sigma_{im}\mathbf{Tr} \Biggl\{\Sigma_y^{-1} Cz_{im} & {z}^{'}_{km}C^{'}\Biggr\}+\nonumber\\
&+\sum_{j}\nu_{mij}\int_0^Tg_{km}(t)g_{jm}(t)dt=0,\nonumber\\
c)\,\sigma_{im}\mathbf{Tr}\Biggl\{\Sigma_y^{-1}Cz_{im}& z^{'}_{km}C^{'}\Biggr\} +\nu_{mik} = 0.\label{nupijOpt}
\end{align}
}
Substitute $\nu_{mik}$ in (\ref{nupijOpt}) to (\ref{gipOpt}):  
{\small
\begin{align}
\sigma_{im}\mathbf{Tr}\bigl\{\Sigma_y^{-1} Cz_{im}& w^{'}_{A, b_m}(t)C^{'}\bigr\}-\nonumber\\
&-\sum_{j} \sigma_{im}\mathbf{Tr}\left\{\Sigma_y^{-1}C z_{im} z^{'}_{jm}C^{'}\right\}g_{jm}(t)=0,\nonumber
\end{align}
}
which, due to the trace circularity, is equal to the following expression:
{\small
\begin{align}
\sigma_{im}w^{'}_{A, b_m}(t)C^{'}\Sigma_y^{-1} Cz_{im} -\sigma_{im}\sum_{j}g_{jm}(t) z^{'}_{jm}C^{'}\Sigma_y^{-1} Cz_{im}=0,\label{removeTrace}
\end{align}
}
Finally, multiplying (\ref{removeTrace}) by $w_{A, b_m}(t)$ and integrating over $t\in[0,T]$ we get:  
{\small
\begin{align}
\sigma_{im}\Biggl(\int_0^Tw_{A, b_m}(t)&w^{'}_{A, b_m}(t)dt\Biggr)C^{'}\Sigma_y^{-1}Cz_{im} \nonumber\\
&-\sigma_{im}\sum_{j} z_{jm} z^{'}_{jm}C^{'}\Sigma_y^{-1}Cz_{im}=0,\nonumber
\end{align}
}
which is equal to (\ref{zVecEq}), using the notations in (\ref{GramianNotation}):
{\small
\begin{align}\label{uniqSolLemma}
\!\!\!\!\!\forall i,m:\sigma_{im}\!\!\left(\!\!W(A, b_m, T)\!-\!\sum_{j} z_{jm} z^{'}_{jm}\!\!\right)\!\!C^{'}\Sigma_y^{-1} Cz_{im}\!\!=0.\nonumber
\end{align}
}$\hfill\blacksquare$  
\end{proof}
\end{appendices}
\ifCLASSOPTIONcaptionsoff
  \newpage
\fi
\bibliographystyle{IEEEtran}

\begin{thebibliography}{1}
\bibitem{Empowerment1} A.S.~Klyubin, D.~Polani, and C.~Nehaniv, \emph{"Empowerment: A Universal Agent-Centric Measure of Control"}. In Proc. CEC 2005. IEEE.
\bibitem{Empowerment3} C.~Salge, C.~Glackin, and D.~Polani, \emph{"Empowerment -- An Introduction"}. In M.~Prokopenko, editor, \emph{"Guided Self-Organization: Inception"}, Springer, 2014.
\bibitem{Empowerment4} S.~Mohamed, D.~Rezende. \emph{"Variational Information Maximisation for Intrinsically Motivated Reinforcement Learning."} Advances in Neural Information Processing Systems. 2015.
\bibitem{Empowerment6} C.~Salge, C.~Glackin, and D.~Polani. \emph{"Changing the environment based on empowerment as intrinsic motivation."} Entropy 16.5 (2014): 2789-2819.
\bibitem{Empowerment9} T.~Jung, D.~Polani, and P.~Stone. \emph{"Empowerment for continuous agent-environment systems."} Adaptive Behavior 19.1 (2011): 16-39.
\bibitem{Empowerment12} T.~Anthony, D.~Polani and L.C.~Nehaniv, \emph{"General Self-Motivation and Strategy Identification: Case Studies based on Sokoban and Pac-Man"}, IEEE Transactions on Computational Intelligence and AI in Games, vol., 6, pp., 1-17, 2014. doi:10.1109/TCIAIG.2013.2295372. 

\bibitem{Schmidhuber1} J.~Schmidhuber, \emph{"Formal theory of creativity, fun, and intrinsic motivation (1990–2010)."} IEEE Transactions on Autonomous Mental Development 2.3 (2010): 230-247.
\bibitem{Oudeyer1} K.~Kaplan, and P.~Oudeyer,\emph{"Maximizing learning progress: an internal reward system for development."} Embodied artificial intelligence. Springer Berlin Heidelberg, 2004. 259-270.  
\bibitem{Ay1} N.~Ay, et al. \emph{"Predictive information and explorative behavior of autonomous robots."} The European Physical Journal B 63.3 (2008): 329-339.
\bibitem{Ay2} N.~Ay, N.~Bertschinger, R.~Der, F.~Guttler, and E.~Olbrich, \emph{"Predictive information and explorative behavior of autonomous robots"}, European Journal of Physics B, vol.~63, pp. 329--339, 2008.
\bibitem{InfoContr1} S.~Tatikonda and S.~Mitter, \emph{"Control Under Communication Constraints"},  IEEE Transactions on Automatic Control, Volume 49, Issue 7, pp. 1056-1068, July 2004.
\bibitem{InfoContr2} V.S.~Borkar and S.~Mitter. \emph{"LQG control with communication constraints"}. Springer US, 1997.
\bibitem{InfoContr3} T.~Takashi, and H.~Sandberg. \emph{""SDP-based joint sensor and controller design for information-regularized optimal LQG control."} 2015 54th IEEE Conference on Decision and Control (CDC). IEEE, 2015.
\bibitem{InfoContr4} R.~Fox and N.~Tishby. \emph{"Minimum-Information LQG Control-Part I: Memoryless Controllers."}, arXiv preprint arXiv:1606.01946 (2016).
\bibitem{InfoContr5} T.~Takashi, P.~Esfahani, and S.~Mitter. \emph{"Three-Stage Separation Theorem for Information-Frugal LQG Control."}
\bibitem{InfoDecisions} N.~Tishby, and D.Polani. "Information theory of decisions and actions." Perception-action cycle. Springer New York, 2011. 601-636.
\bibitem{InfoDecisions2} J.~Rubin, O.~Shamir, and N.~Tishby. \emph{"Trading value and information in MDPs."}, Decision Making with Imperfect Decision Makers. Springer Berlin Heidelberg, 2012. 57-74.
\bibitem{StillThermoPred} S,~Still, et al. \emph{"Thermodynamics of prediction."}, Physical review letters 109.12 (2012): 120604.
\bibitem{Cover}
	T.~M.~Cover, J.~A.~Thomas, \emph{"Elements of information theory"}, Wiley-Interscience, 2006, ISBN:0471241954.  
\bibitem{RefControllability}
	K.~Ogata, \emph{"Modern Control Engineering (3rd ed.)"}, Upper Saddle River, NJ: Prentice-Hall, 1997. ISBN 0-13-227307-1.
\bibitem{Gallager}
R.~Gallager. \emph{"Principles of Digital Communications I"}, Massachusetts Institute of Technology.
\bibitem{LinSys} C.~Chen, \emph{"Linear System Theory and Design"}, Oxford U., Press, 2012.
\bibitem{VarDer} E.~Engel, R.~Dreizler, \emph{"Density Functional Theory: An Advanced Course"}, Springer, 2011.
\bibitem{Boyd} S.~Boyd, and V,~Lieven, \emph{"Convex optimization"}, Cambridge university press, 2004.
\bibitem{BoydIterWF} W.~Yu, W.~Rhee, S.~Boyd, J.~Cioffi,\emph{"Iterative Water-Filling for Gaussian Vector Multiple-Access Channels"}, IEEE Trans. Inform. Theory, vol. 50, no. 1, January 2004.
\end{thebibliography}

\end{document}